\let\into\hookrightarrow
\newcommand{\half}{\tfrac12}
\newcommand{\fg}{\mathfrak{g}}
\newcommand{\fa}{\mathfrak{a}}
\newcommand{\fe}{\mathfrak{e}}
\newcommand{\ff}{\mathfrak{f}}
\newcommand{\fW}{\mathfrak{W}}
\newcommand{\fgl}{\mathfrak{gl}}
\newcommand{\fh}{\mathfrak{h}}
\newcommand{\fk}{\mathfrak{k}}
\newcommand{\fp}{\mathfrak{p}}
\newcommand{\fs}{\mathfrak{s}}
\newcommand{\fso}{\mathfrak{so}}
\newcommand{\fosp}{\mathfrak{osp}}
\newcommand{\fsp}{\mathfrak{sp}}
\newcommand{\fsu}{\mathfrak{su}}
\newcommand{\fu}{\mathfrak{u}}
\newcommand{\GL}{\mathrm{GL}}
\newcommand{\RR}{\mathbb{R}}
\newcommand{\ZZ}{\mathbb{Z}}
\newcommand{\eL}{\mathscr{L}}
\newcommand{\eM}{\mathscr{M}}
\newcommand{\eV}{\mathscr{V}}
\newcommand{\be}{\boldsymbol{e}}
\newcommand{\bx}{\boldsymbol{x}}
\newcommand{\by}{\boldsymbol{y}}
\newcommand{\bz}{\boldsymbol{z}}
\DeclareMathOperator{\Aut}{Aut}
\DeclareMathOperator{\Ca}{Cartan}
\DeclareMathOperator{\Iso}{Iso}
\DeclareMathOperator{\Hom}{Hom}
\DeclareMathOperator{\Der}{Der}
\DeclareMathOperator{\Ad}{Ad}
\DeclareMathOperator{\ad}{ad}
\DeclareMathOperator{\rank}{rank}
\theoremstyle{plain}
\newtheorem{lemma}{Lemma}
\newtheorem{proposition}[lemma]{Proposition}
\newtheorem{theorem}[lemma]{Theorem}
\theoremstyle{definition}
\newtheorem{remark}[lemma]{Remark}
\newcommand{\MUNCH}[1]{\relax}
\begin{document}
\title{Lorentzian Lie 3-algebras and their Bagger--Lambert moduli space}
\author{Paul de Medeiros, José Figueroa-O'Farrill and Elena Méndez-Escobar}
\address{Maxwell Institute and School of Mathematics, University of
  Edinburgh, UK}
\email{P.deMedeiros@ed.ac.uk, J.M.Figueroa@ed.ac.uk, E.Mendez@ed.ac.uk}
\date{\today}
\begin{abstract}
  We classify Lie 3-algebras possessing an invariant lorentzian inner
  product.  The indecomposable objects are either one-dimensional,
  simple or in one-to-one correspondence with compact real forms of
  metric semisimple Lie algebras.  We analyse the moduli space of
  classical vacua of the Bagger--Lambert theory corresponding to these
  Lie 3-algebras.  We establish a one-to-one correspondence between
  one branch of the moduli space and compact riemannian symmetric
  spaces.  We analyse the asymptotic behaviour of the moduli space and
  identify a large class of models with moduli branches exhibiting the
  desired $N^{3/2}$ behaviour.
\end{abstract}
\maketitle
\tableofcontents

\section{Introduction and motivation}
\label{sec:intro}

The existence of low-energy effective field theory descriptions of
coincident D-branes in string theory in terms of nonabelian gauge
theories has led to a deeper understanding of the dynamics of these
nonperturbative objects.  A similar approximation modelling the dynamics
of multiple M2- and M5-branes in M-theory would also be greatly
beneficial and could lead to a clearer understanding of the fundamental
degrees of freedom here.  However, the construction of interacting
effective field theories which explicitly realise all the symmetries
expected for M-theory branes has proved elusive.

This is perhaps not surprising given the analyses based on
gravitational thermodynamics \cite{KlebanovTseytlinEntropy} and
absorption cross-sections \cite{KlebanovAbsorption} that suggest the
entropy of a large number of $N$ coincident branes should obey a power
law $N^k$, with $k=2,\frac32,3$ for D-, M2-, M5-branes, respectively.
Thus a supersymmetric Yang--Mills description of D-branes involving
$N\times N$ matrices is natural while the appropriate gauge-theoretic
description of multiple branes in M-theory is less clear.  Based on
the foundational work by Bagger and Lambert \cite{BL1} and Gustavsson
\cite{GustavssonAlgM2}, a superconformal field theory in
three-dimensional Minkowski spacetime was constructed in \cite{BL2} as
a model of multiple M2-branes in M-theory.  The theory is maximally
supersymmetric and scale-invariant with an explicit $\fso(8)$
R-symmetry.  This is consistent with the full $\fosp(8|4)$
superconformal symmetry of the near-horizon geometry of M2-branes in
eleven-dimensional supergravity.  The Lagrangian and its equations of
motion nicely encapsulate several other features expected
\cite{SchwarzChernSimons, BasuHarvey} in a low-energy description of
multiple M2-branes.  These encouraging properties have prompted a
great deal of interest in the Bagger--Lambert theory \cite{BL3,
  MukhiBL, BandresCS, BermanAMM, VanRaamsdonkBL, MorozovMM2,
  LambertTong, DistlerBL, GranNilssonBL, Ho3Lie, BergshoeffMM2,
  GP3Lie, GG3Lie, GPkLie, HoM5M2, GMRBL, BRGTV, HIM-M2toD2rev,
  MorozovBLG, MasahitoBL, HIM-M5MM2, Krishnan, JKKKP, LiWang,
  BanerjeeSenBL, LinBL}.

A novel feature of the Bagger--Lambert theory is that it has a local
gauge symmetry which is not based on a Lie algebra, but rather on a Lie
3-algebra.  The analogue of the Lie bracket $[-,-]$ here being the
3-bracket $[-,-,-]$, an alternating trilinear map on a vector space $V$,
which satisfies a natural generalisation of the Jacobi identity
(sometimes referred to as the fundamental identity).  The dynamical
fields in the Bagger--Lambert model are taken to be valued in $V$ and
consist of eight real bosonic scalars and a fermionic spinor in three
dimensions which transforms as a chiral spinor under the $\fso(8)$
R-symmetry.  There is also a non-dynamical gauge field which takes
values in a Lie subalgebra of $\fgl(V)$.  The on-shell closure of the
supersymmetry transformations for these fields follows from the
fundamental identity.

To obtain the correct equations of motion from a Lagrangian that is
invariant under all the aforementioned symmetries seems to require the
Lie 3-algebra to admit an invariant inner product.  In this paper we
will consider only Lie 3-algebras with a nondegenerate inner product.
The signature of this inner product determines the relative signs of the
kinetic terms for the scalar fields in the Bagger--Lambert Lagrangian.
As with ordinary gauge theory, taking this metric to be
positive-definite would avoid potential issues concerning lack of
unitarity in the quantum theory.  The problem is that there are very few
euclidean metric Lie 3-algebras.  Indeed, as shown in \cite{NagykLie}
(see also \cite{GP3Lie, GG3Lie}), they can always be written as the
direct sum of abelian Lie 3-algebras plus multiple copies of the unique
simple euclidean Lie 3-algebra considered by Bagger and Lambert in their
original construction.  Therefore this assumption is too restrictive.

It should be noted that one could still construct the Lagrangian
and gauge-invariant operators for the Bagger--Lambert theory for a
Lie 3-algebra with a degenerate inner product.  However, one could
not obtain from such a Lagrangian the equations of motion for the
fields that couple only to the degenerate components of this
metric.  Nonetheless, in certain cases it is possible to write down
an auxiliary Lagrangian that is decoupled from the one involving
the nondegenerate components of the inner product to account for
the missing equations of motion (see e.g. \cite{LinBL} where the
metric is degenerate in just one direction).  The hope with this
construction would be to evade the no-go theorem noted above and
perhaps find Lie 3-algebras with invariant degenerate metrics
whose nondegenerate components are positive-definite.  A problem
with such Lie 3-algebras is that they do not seem to allow any new
interactions in the Bagger--Lambert Lagrangian.  As shown in
Remark~\ref{re:degip}, the canonical 4-form in \eqref{eq:4-form}
that appears in all the interaction terms in the Lagrangian can
have no `legs' in the degenerate directions.  Hence the Lagrangian
can always be written in terms of the quotient metric Lie
3-algebra corresponding to the nondegenerate directions.  Since
this quotient Lie 3-algebra is nondegenerate, the problem can be
reduced to the nondegenerate case.

Of course, it is by no means guaranteed that the low-energy effective
theory on multiple M2-branes should have a Lagrangian description and
Gran et al \cite{GranNilssonBL} have considered, purely at the level of
its classical equations of motion, the Bagger--Lambert theory for a
class of Lie 3-algebras (defined by a Lie algebra of one dimension
lower) that do not admit a metric.  Following the novel Higgs mechanism
technique introduced by Mukhi and Papageorgakis \cite{MukhiBL}, they are
able to reduce to the correct equations of motion for the supersymmetric
Yang--Mills description of multiple D2-branes with arbitrary gauge
algebra corresponding to the choice of Lie algebra defining the Lie
3-algebra.

An alternative approach considered recently \cite{GMRBL, BRGTV,
  HIM-M2toD2rev} is to investigate the Bagger--Lambert theory for a
class of Lie 3-algebras (defined by a euclidean Lie algebra in two
dimensions lower) admitting an inner product of lorentzian signature.
It is unclear at present whether the Bagger--Lambert theory associated
with such 3-algebras is really unitary at the quantum level, but there
are some encouraging signs noted in the aforementioned references, based
on the specific structure of the interactions in the Bagger--Lambert
model and the way the ghost-like fields that might give rise to
negative-norm states seem to decouple from the physical Hilbert space.
By giving a vacuum expectation value to one of the scalar fields in a
null direction of the Lie 3-algebra and taking the Lie algebra to be
$\fsu(N)$, the correct reduction of the Bagger--Lambert to
supersymmetric Yang--Mills effective Lagrangian for $N$ D2-branes is
obtained in \cite{GMRBL, HIM-M2toD2rev}.  Moreover, in \cite{BRGTV} it
is suggested that the moduli space of this Bagger--Lambert theory has a
branch that corresponds to the moduli space $(\RR^8)^N/S_N$ of $N$
M2-branes in Minkowski spacetime.  This is to be contrasted with the
moduli space of the Bagger--Lambert theory associated with the unique
simple euclidean Lie 3-algebra used in the original construction of
Bagger and Lambert.  The semi-classical moduli space $(\RR^8 \times
\RR^8)/D_{2k}$ for this theory was obtained in \cite{LambertTong} for
$k=1,2$ and in \cite{DistlerBL} for general $k$, where the integer $k$
in the dihedral group $D_{2k}$ of order $4k$ here corresponds to the
value of the quantised level for the Chern-Simons term in the
Bagger--Lambert Lagrangian.  For $k>2$, the M-theoretic interpretation
of the moduli space is still not entirely clear, but is thought to
describe two M2-branes on a so-called M-fold: essentially a $\ZZ_{2k}$
quotient which acts on both the background spacetime and the M2-branes.

Motivated by the recent results in \cite{GMRBL,BRGTV,HIM-M2toD2rev},
and by the tractability of the problem, we set ourselves the task of
classifying the lorentzian metric Lie 3-algebras and studying in
detail the corresponding moduli space of classical vacua.  This paper
contains our results and is organised as follows.  In Section
\ref{sec:metric-3-lie} we study the structure of metric Lie
3-algebras.  The strong analogy with the case of metric Lie algebras
turns out to be very fruitful and after introducing the necessary, yet
standard, algebraic concepts, we are able to classify the lorentzian
3-algebras in Section \ref{sec:lorentzian-3-lie}.  The main result in
this section is Theorem \ref{th:lorentzian}, which says that the
indecomposable ones are either one-dimensional, simple or belong to a
class whose objects are in one-to-one correspondence with the compact
real forms of metric semisimple Lie algebras.  These latter ones are
precisely the class of lorentzian metric Lie 3-algebras already
discovered in \cite{GMRBL,BRGTV,HIM-M2toD2rev}!

In Section \ref{sec:moduli} we investigate the structure of the moduli
space of maximally supersymmetric vacua of the Bagger--Lambert theory
for indecomposable lorentzian Lie 3-algebras.  This adds to and
complements the preliminary analysis in \cite{BRGTV}.  This moduli space
can be described as the quotient by residual gauge symmetries of a
certain variety in the space of constant scalar fields in the
Bagger--Lambert theory.  We perform a detailed calculation of the
residual symmetries of the vacuum which correspond to automorphisms of
the Lie 3-algebra.  To some extent, we are able to factorise the moduli
space and find it to have two distinct branches (which we label
\emph{degenerate} and \emph{nondegenerate}), according to whether one of
the scalar fields in a null direction of the Lie 3-algebra is zero or
not (in agreement with \cite{BRGTV}, whose \emph{abelian} branch
corresponds to our nondegenerate branch).

The nondegenerate branch has the simpler structure and is of the
form
\begin{equation*}
  \eM_{\text{nondeg}} = \RR^{16} \times \RR^{8r}/\fW~,
\end{equation*}
for a compact semisimple Lie algebra $\fg$ of rank $r$ whose Weyl group
is $\fW$.  The second factor can be identified with (the strong coupling
limit of) the classical moduli space of $N{=}8$ super Yang-Mills theory
with gauge algebra $\fg$ in three-dimensional Minkowski space (see
e.g. \cite{Seiberg16}).  The scalars in the two null directions of the
3-algebra, spanning $\RR^{16}$, are related to the extra $\fu(1)$ degree
of freedom and Yang-Mills coupling in the D2-brane reduction described
in \cite{GMRBL}.

The degenerate branch has a much more intricate structure and is defined
by subspaces $\fp \subset \fg$ satisfying $[\fp,\fp]=\fp^\perp$, i.e.,
the Lie bracket on $\fp$ spans the orthogonal complement $\fp^\perp$ of
$\fp \subset \fg$.  We find a large class of such subspaces $\fp$ that
are in one-to-one correspondence with compact riemannian symmetric
spaces, and have maximal dimension $\half(\dim\fg + \rank\fg)$.  After
properly performing the quotient, it turns out the dimension of the
degenerate branch can be very different for different choices of
symmetric spaces.  It is often of larger dimension than the nondegenerate
branch.  It would be very interesting to understand what the M-theoretic
interpretation of this degenerate branch is.

We end by exploring numerically the asymptotic properties of these
branches as one one takes the dimension of the Lie algebra $\fg$ to be
large and we exhibit a large class of models with the infamous $N^{3/2}$
scaling at large $N$.

\section*{Acknowledgments}

It is a pleasure to thank Costis Papageorgakis, Patricia Ritter, Joan
Simón and especially Dmitri Alekseevsky for useful conversations
and/or correspondence.  PdM is supported by a Seggie-Brown
Postdoctoral Fellowship of the School of Mathematics of the University
of Edinburgh.

\section{Metric Lie 3-algebras}
\label{sec:metric-3-lie}

In this section we study the structure of metric Lie 3-algebras.  Lie
$n$-algebras (for $n>2$) were introduced by Filippov \cite{Filippov}
and studied in a number of subsequent papers by Filippov and other
authors.  Metric Lie $n$-algebras (for $n>2$) seem to have been
considered for the first time in \cite{FOPPluecker}, albeit
tangentially.

\subsection{Basic definitions}
\label{sec:basic-definitions}

Let $V$ be a finite-dimensional real vector space.  Recall that a Lie
algebra structure on $V$ is a linear map $[,]: \Lambda^2 V \to V$
obeying the Jacobi identity.  There are many equivalent ways to think
of the Jacobi identity.  One such way is to say that the endomorphisms
$\ad_x$ of $V$ for all $x \in V$, defined by $\ad_x(z) = [x,z]$, are a
derivation over the bracket, or in other words,
\begin{equation}
  \label{eq:Jacobi}
  [x,[y,z]] = [[x,y],z] + [y,[x,z]]~,
\end{equation}
for all $x,y,z\in V$.  This formulation admits a straight-forward
generalisation to $n$-ary brackets.  In this note we will be
interested in the case where $n=3$.  Thus we define a \textbf{Lie
  3-algebra} (also known as a \textbf{Filippov 3-algebra}) structure
on $V$ to be a linear map $\Phi: \Lambda^3 V \to V$, often written simply
as a 3-bracket $\Phi(x,y,y)=[x,y,z]$, such that for all $x,y\in V$,
the endomorphism $\ad_{x,y}: V \to V$,  defined by
\begin{equation}
  \label{eq:varphi}
  \ad_{x,y}(z) = -\ad_{y,x}(z) = [x,y,z]~,
\end{equation}
is a derivation over $\Phi$.  In other words, we demand that for all
$x,y,z_1,z_2,z_3\in V$,
\begin{equation}
  \label{eq:3-Lie}
  [x,y,[z_1,z_2,z_3]] =   [[x,y,z_1], z_2, z_3] + [z_1,[x,y,z_2],z_3]
  + [z_1,z_2,[x,y,z_3]]~,
\end{equation}
which we call the \textbf{3-Jacobi identity}.  An endomorphism
$\delta: V \to V$ is called a \textbf{derivation} of the Lie
3-algebra if for all $x,y,z \in V$,
\begin{equation}
  \label{eq:derivation}
  \delta[x,y,z] = [\delta x, y, z] + [x, \delta y, z] + [x, y, \delta z]~.
\end{equation}
The 3-Jacobi identity says that for all $x,y\in V$, the endomorphism
$\ad_{x,y}$ is a derivation.  Such derivations are said to be
\textbf{inner}.  Derivations form a Lie subalgebra of $\fgl(V)$ of
which the inner derivations are an ideal.

Now recall that a Lie algebra structure on $V$ is said to be
\textbf{metric}, if there is an inner product (i.e., a nondegenerate
symmetric bilinear form) $b \in S^2V^*$ on $V$, often written simply
as $b(x,y) = \left<x,y\right>$, which is invariant under the action of
$\ad_x$ for all $x\in V$; that is,
\begin{equation}
  \label{eq:metric-Lie}
  \left<[x,y],z\right> + \left< y,[x,z] \right>=0~,
\end{equation}
for all $x,y,z\in V$.  In the same spirit, we say that a Lie 3-algebra
structure $(V,\Phi,b)$ is \textbf{metric} if the inner product $b$ is
invariant under the inner derivations $\ad_{x,y}$ for all $x,y\in V$;
that is,
\begin{equation}
  \label{eq:metric-3-Lie}
  \left<[x,y,z],w\right> + \left< z, [x,y,w] \right> = 0~,
\end{equation}
for all $x,y,z,w \in V$.  In other words, the inner derivations
$\ad_{x,y}$ lie in the Lie subalgebra $\fso(V) < \fgl(V)$ preserving
$b$.  Just like a metric Lie algebra possesses a canonical three-form
$\Omega \in \Lambda^3V^*$, given by $\Omega(x,y,z) =
\left<[x,y],z\right>$, a metric Lie 3-algebra possesses a canonical
$4$-form $F \in \Lambda^4 V^*$, defined by
\begin{equation}
  \label{eq:4-form}
  F(x,y,z,w) = \left<[x,y,z],w\right>~.
\end{equation}

Given two metric Lie 3-algebras $(V_1,\Phi_1,b_1)$ and
$(V_2,\Phi_2,b_2)$, we may form their \textbf{orthogonal direct sum}
$(V_1\oplus V_2,\Phi_1\oplus \Phi_2, b_1 \oplus b_2)$, by declaring
that
\begin{equation*}
  [x_1,x_2,y] = 0 \qquad\text{and}\qquad \left<x_1,x_2\right> = 0~,
\end{equation*}
for all $x_i\in V_i$ and all $y\in V_1 \oplus V_2$.  The resulting
object is again a metric Lie 3-algebra.  A metric Lie 3-algebra is
said to be indecomposable if, roughly speaking, it cannot be written
as an orthogonal direct sum of metric Lie 3-algebras $(V_1\oplus
V_2, \Phi_1\oplus \Phi_2, b_1\oplus b_2)$ with $\dim V_i > 0$.

In order to classify the metric Lie 3-algebras, it is clearly enough
to classify the indecomposable ones.  In order to do this we will find
it convenient to introduce some basic 3-algebraic concepts by
analogy with the theory of Lie algebras.  Most of these concepts can
be found in the foundational paper of Filippov \cite{Filippov}.

\subsection{Some structure theory}
\label{sec:structure}

Let $(V,\Phi)$ be a Lie 3-algebra.  A subspace $W \subset V$ is a
\textbf{subalgebra}, written $W < V$, if $[W,W,W]\subset W$, which is a
convenient shorthand notation for the following: for all $x,y,z \in W$,
$[x,y,z] \in W$.  We will this shorthand notation without further
comment in what follows.  A subalgebra $W<V$ (which could be all of $V$)
is said to be \textbf{abelian} if $[W,W,W]=0$.

If $V,W$ are Lie 3-algebras, then a linear map $\phi: V \to W$ is a
\textbf{homomorphism} if
\begin{equation*}
  \phi[x,y,z] = [\phi(x), \phi(y), \phi(z)]~,
\end{equation*}
for all $x,y,z\in V$.  If $\phi$ is also a vector space isomorphism,
we say that it is an \textbf{isomorphism} of Lie 3-algebras.

It is clear that the image of a homomorphism is a subalgebra and we
expect that the kernel ought to be an ideal.  Indeed, if
$x\in\ker\phi$, then $\phi[x,y,z]=0$ for all $y,z\in V$.  This
suggests the following definition: a subspace $I \subset V$ is an
\textbf{ideal}, written $I \lhd V$, if $[I,V,V]\subset I$.

\begin{lemma}
  There is a one-to-one correspondence between ideals and kernels of
  homomorphisms.
\end{lemma}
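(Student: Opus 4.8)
The plan is to establish the correspondence in the standard way familiar from the theory of groups and ordinary Lie algebras, by proving separately that (i) the kernel of any homomorphism is an ideal and (ii) every ideal arises as the kernel of its quotient homomorphism. Direction (i) is essentially already recorded in the discussion preceding the statement: if $\phi\colon V \to W$ is a homomorphism and $x \in \ker\phi$, then for all $y,z \in V$ one has $\phi[x,y,z] = [\phi(x),\phi(y),\phi(z)] = [0,\phi(y),\phi(z)] = 0$, whence $[x,y,z] \in \ker\phi$ and therefore $[\ker\phi, V, V] \subset \ker\phi$, i.e. $\ker\phi \lhd V$.

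For direction (ii), given an ideal $I \lhd V$ I would equip the quotient vector space $V/I$ with a Lie 3-algebra structure for which the canonical projection $\pi\colon V \to V/I$ is a homomorphism and whose kernel is manifestly $I$. The only natural candidate for the bracket is $[\pi(x),\pi(y),\pi(z)] := \pi[x,y,z]$, and the crux of the argument is to verify that this is well defined, independently of the choice of representatives. Replacing $x$ by $x + i$ with $i \in I$ gives $[x+i,y,z] = [x,y,z] + [i,y,z]$, and since $[i,y,z] \in [I,V,V] \subset I$ by the ideal property, the two brackets agree modulo $I$. The alternating property of $\Phi$ then reduces the analogous checks for the second and third arguments to this same case, so the quotient bracket depends only on the images $\pi(x),\pi(y),\pi(z)$. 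This is precisely the point at which the full ideal condition $[I,V,V]\subset I$ is used, and it is the only genuinely substantive step; everything else is formal.

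Once the bracket is well defined, $\pi$ is by construction a surjective homomorphism, so the 3-Jacobi identity \eqref{eq:3-Lie} on $V/I$ follows from surjectivity of $\pi$ together with the identity on $V$, making $V/I$ a genuine Lie 3-algebra. Finally $\ker\pi = I$ by the definition of the quotient, so $I$ is realised as the kernel of a homomorphism. Combining the two directions yields the asserted bijective correspondence. I expect no serious obstacle here: the argument is routine, and the only point demanding care is the representative-independence of the quotient bracket, where the alternating property of $\Phi$ must be invoked to reduce the three-slot check to the single case handled by $[I,V,V]\subset I$.
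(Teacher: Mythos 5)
Your proof is correct and follows essentially the same route as the paper: kernels are ideals by the computation that motivated the definition, and conversely every ideal $I$ is the kernel of the canonical projection onto the quotient $V/I$ equipped with the induced bracket. The paper simply asserts the quotient construction without spelling out the well-definedness and 3-Jacobi checks that you carry out explicitly.
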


\begin{proof}
  The kernel of a homomorphism is an ideal, by definition.  (In fact,
  this motivated the definition.)  Conversely, if $I \lhd V$, then
  $V/I$ is a Lie 3-algebra with bracket
  \begin{equation*}
    [x+I, y+I, z+I] = [x,y,z] + I~,
  \end{equation*}
  and the canonical projection $V \to V/I$ is a homomorphism with
  kernel $I$.
\end{proof}

\begin{lemma}
  If $I,J$ are ideals of $V$ then so is their intersection $I\cap J$
  and their linear span $I + J$, defined as the smallest vector
  subspace containing their union $I \cup J$.
\end{lemma}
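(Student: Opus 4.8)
The plan is to verify the two defining closure properties of an ideal directly from the definition $I \lhd V \iff [I,V,V]\subset I$, treating the intersection and the linear span separately. Both verifications are elementary and rely only on the bilinearity of the bracket in each slot together with the fact that $I$ and $J$ are each already ideals, so no structure theory beyond the definitions in Section~\ref{sec:structure} is needed.

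For the intersection, first I would fix $x \in I\cap J$ and arbitrary $y,z\in V$, and form the bracket $[x,y,z]$. Since $x\in I$ and $I\lhd V$, we have $[x,y,z]\in I$; since $x\in J$ and $J\lhd V$, we likewise have $[x,y,z]\in J$. Hence $[x,y,z]\in I\cap J$, which is exactly the assertion $[I\cap J,V,V]\subset I\cap J$. One should also note in passing that $I\cap J$ is a vector subspace (the intersection of two subspaces always is), so the only content is the bracket condition, which is what I just checked.

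For the linear span, first I would recall that $I+J$ consists precisely of sums $a+b$ with $a\in I$, $b\in J$; this is the smallest subspace containing $I\cup J$, so it is automatically a vector subspace. To verify the ideal condition, fix a generic element $a+b\in I+J$ and arbitrary $y,z\in V$ and expand
\begin{equation*}
  [a+b,y,z] = [a,y,z] + [b,y,z]~,
\end{equation*}
using linearity of the 3-bracket in its first argument. The first term lies in $I$ because $a\in I$ and $I\lhd V$, and the second lies in $J$ because $b\in J$ and $J\lhd V$. Thus $[a+b,y,z]\in I+J$, establishing $[I+J,V,V]\subset I+J$.

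There is no genuine obstacle here: the statement is a structural triviality whose only subtlety is bookkeeping, namely remembering that an ideal must be verified to be both a subspace and closed under the bracket $[-,V,V]$, and that the alternating/linear nature of $\Phi$ lets the span computation go through termwise. If one wished to be fully careful, the single point worth emphasising is that the bracket is linear in each slot, so that $[a+b,y,z]$ splits as above; the rest is immediate. I would therefore present both parts in a few lines, with the span case written out via the displayed expansion and the intersection case stated in one sentence.
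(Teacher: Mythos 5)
Your proof is correct and follows essentially the same argument as the paper's: the intersection case uses that $[I\cap J,V,V]$ lands in both $I$ and $J$, and the span case uses linearity of the bracket in its first slot to write $[I+J,V,V]\subset [I,V,V]+[J,V,V]\subset I+J$. The only difference is that you work element-wise where the paper states the same inclusions at the level of subspaces.
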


\begin{proof}
  Since $I\cap J \subset I$, $[I\cap J, V, V] \subset I$ and since
  $I\cap J \subset J$, $[I\cap J, V, V] \subset J$, hence $[I\cap J,
  V, V] \subset I \cap J$.  Similarly, $[I+J,V,V] \subset [I,V,V] +
  [J, V, V] \subset I + J$.
\end{proof}

An ideal $I \lhd V$ is \textbf{minimal} if any other ideal $J \lhd V$
contained in $I$ is either $0$ or $I$.  Dually, an ideal $I \lhd V$ is
\textbf{maximal} if any other ideal $J \lhd V$ containing $I$ is
either $I$ or $V$.

A Lie 3-algebra is \textbf{simple} if it is not one-dimensional and
every ideal $I\lhd V$ is either $0$ or $V$.

\begin{lemma}\label{le:simplequot}
  If $I\lhd V$ is a maximal ideal, then $V/I$ is simple or
  one-dimensional.
\end{lemma}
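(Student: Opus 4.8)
The plan is to exploit the correspondence between ideals and kernels of homomorphisms established above, applied to the canonical projection $\pi\colon V \to V/I$. By the definition of simplicity it will suffice to show that $V/I$ has no ideals other than $0$ and $V/I$ itself: in that case $V/I$ is simple unless it happens to be one-dimensional, which is precisely the dichotomy asserted.

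The one computation I would carry out is that the preimage under $\pi$ of an ideal is again an ideal containing $I$. Given $K \lhd V/I$, set $J = \pi^{-1}(K)$; for $x \in J$ and arbitrary $y,z \in V$, the homomorphism property gives
\begin{equation*}
  \pi[x,y,z] = [\pi(x),\pi(y),\pi(z)] \in K~,
\end{equation*}
since $\pi(x) \in K$ and $K$ is an ideal, whence $[x,y,z] \in J$ and so $J \lhd V$. As $0 \in K$ one also has $I = \ker\pi \subseteq J$.

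With this in hand the conclusion follows formally. Taking any \emph{nonzero} ideal $K$ of $V/I$ and forming $J = \pi^{-1}(K)$, surjectivity of $\pi$ together with $K \neq 0$ forces $J$ to contain $\ker\pi = I$ strictly, i.e. $I \subsetneq J \lhd V$. Maximality of $I$ then leaves only $J = V$, whereupon $K = \pi(J) = V/I$. Hence every nonzero ideal of $V/I$ is the whole of $V/I$, as required.

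I do not expect any genuine obstacle here, as the argument is purely formal once the preimage-of-an-ideal step is recorded. The only point needing a moment's care is that $K \neq 0$ yields the \emph{strict} inclusion $I \subsetneq J$, since it is this strictness --- rather than the mere containment $I \subseteq J$ --- that permits maximality to be invoked to conclude $J = V$.
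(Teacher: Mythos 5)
Your proof is correct and follows essentially the same route as the paper's: pull back an ideal of $V/I$ along the canonical projection $\pi$, observe that the preimage is an ideal of $V$ containing $I=\ker\pi$, and invoke maximality. The only cosmetic difference is that the paper treats both cases $\pi^{-1}J=I$ (giving $J=0$) and $\pi^{-1}J=V$ (giving $J=V/I$) symmetrically, whereas you assume $K\neq 0$ at the outset to force the strict inclusion; the two formulations are logically equivalent.
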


\begin{proof}
  Let $\pi: V \to V/I$ denote the natural surjection, suppose that
  $J\subset V/I$ is an ideal and let let $\pi^{-1}J = \left\{x \in
    V \middle | \pi(x) \in J\right\}$.  Then $\pi^{-1}J$ is an
  ideal of $V$: $\pi[\pi^{-1}J,V,V] = [J, V/I, V/I] \subset
  J$, whence $[\pi^{-1}J,V,V] \subset \pi^{-1}J$.  Since $I = \ker
  \pi$, $I$ is contained in $\pi^{-1}J$, but since $I$ is maximal
  $\pi^{-1}J=I$ or $\pi^{-1}J=V$.  In the former case, $J =
  \pi\pi^{-1}J = \pi I = 0$ and in the latter $J = \pi\pi^{-1}J = \pi
  V = V/I$.  Hence $V/I$ has no proper ideals.
\end{proof}

Simple Lie $n$-algebras have been classified.  In particular, for $n=3$
we have the following

\begin{theorem}[\cite{LingSimple}]\label{th:simple}
  A simple real Lie 3-algebra is isomorphic to one of the
  four-dimensional Lie 3-algebras defined, relative to a basis
  $\be_i$, by
  \begin{equation}
    \label{eq:simple-3-Lie}
    [\be_i,\be_j,\be_k] = \sum_{\ell=1}^4 \varepsilon_{ijk\ell}
    \lambda_\ell \be_\ell~,
  \end{equation}
  for some $\lambda_\ell$, all nonzero.
\end{theorem}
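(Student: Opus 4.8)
The plan is to reconstruct $V$ from its Lie algebra of inner derivations, in close analogy with how a simple metric Lie algebra is reconstructed from its adjoint representation. First I would attach to the Lie 3-algebra $(V,\Phi)$ the subspace $\fg = \operatorname{span}\{\ad_{x,y} \mid x,y\in V\} \subset \fgl(V)$ and use the 3-Jacobi identity \eqref{eq:3-Lie} to show that it is closed under the commutator bracket, with
\begin{equation*}
  [\ad_{x,y},\ad_{u,v}] = \ad_{[x,y,u],v} + \ad_{u,[x,y,v]}~.
\end{equation*}
By construction $V$ is a faithful $\fg$-module. The crucial observation is that a subspace $W\subset V$ is a $\fg$-submodule precisely when $[V,V,W]\subset W$, i.e.\ precisely when $W$ is an ideal; hence simplicity of $V$ forces $V$ to be \emph{irreducible} as a $\fg$-module. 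The same identity shows that the moment map $\mu\colon \Lambda^2 V \to \fg$, $\mu(x\wedge y)=\ad_{x,y}$, is a surjective $\fg$-equivariant map.

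Next I would show that $\fg$ is semisimple. A Lie algebra admitting a faithful irreducible representation is reductive, so it remains to kill the centre. By Schur's lemma (after complexifying, if necessary) a central element would act on the irreducible module by a scalar, i.e.\ as a nonzero multiple of $\mathrm{id}_V$; but every element of $\fg$ is a derivation of $\Phi$, whereas \eqref{eq:derivation} applied to $\mathrm{id}_V$ gives $[x,y,z]=3[x,y,z]$, contradicting simplicity. Hence $\fg$ is semisimple. Complexifying, $V_\CC$ splits into at most two irreducible $\fg_\CC$-modules, so the problem reduces to classifying the simple \emph{complex} Lie 3-algebras and then reading off their real forms.

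The heart of the argument, and the step I expect to be the main obstacle, is the dimension count over $\CC$. Here I would exploit the total antisymmetry of $\Phi$: since $V_\CC$ is irreducible and self-dual (the equivariant surjection $\Lambda^2 V_\CC \twoheadrightarrow \fg_\CC$ endows $V_\CC$ with an invariant symmetric form, so that $\fg_\CC \hookrightarrow \fso(V_\CC)$), antisymmetry promotes the structure tensor to an invariant $4$-form and turns the fundamental identity into a quadratic, Plücker-type relation on that $4$-form of the kind studied in \cite{FOPPluecker}. Combining this constraint with highest-weight theory for $V_\CC$ and with the requirement that $\fg_\CC$ be simultaneously a quotient of $\Lambda^2 V_\CC$ and a subalgebra of $\fso(V_\CC)$, one shows that no higher-dimensional irreducible can occur: the only solution is $\dim_\CC V = 4$, with $\fg_\CC \cong \fso(4,\CC)$ acting on $V_\CC \cong \CC^4$ in the vector representation and the bracket forced to be the Hodge dual $[x,y,z]=\star(x\wedge y\wedge z)$, which is exactly \eqref{eq:simple-3-Lie}. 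Excluding all the higher-dimensional candidates is the delicate part, since it requires controlling the interplay between the module structure and the Plücker relation at once.

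Finally I would descend to $\RR$. The real forms of this complex object are parametrised by signature data, and diagonalising the induced invariant form brings the bracket to the stated shape \eqref{eq:simple-3-Lie}, with the nonzero constants $\lambda_\ell$ recording the chosen real form (and absorbable into a rescaling of the basis up to sign). This reproduces the classification quoted from \cite{LingSimple}.
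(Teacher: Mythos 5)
First, a point of order: the paper does not prove this theorem at all --- it is quoted verbatim from \cite{LingSimple} and used as a black box, so there is no internal proof to compare against. Judging your proposal on its own merits: the opening moves are correct and are indeed the standard first steps of Ling's argument. The inner derivations do close under the commutator via the 3-Jacobi identity, ideals of $V$ are exactly the $\fg$-submodules for $\fg$ the span of the $\ad_{x,y}$, so simplicity forces irreducibility, and the observation that $\mathrm{id}_V$ cannot be a derivation (since \eqref{eq:derivation} would give $[x,y,z]=3[x,y,z]$) correctly kills the centre and shows $\fg$ is semisimple. (Over $\RR$ the Schur step needs a little more care, since the commutant of an irreducible real module can be $\CC$ or the quaternions rather than $\RR$, but this is repairable by complexifying first.)

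The genuine gap is that the decisive step --- excluding every irreducible module of dimension greater than $4$ --- is not actually carried out. You write that ``one shows that no higher-dimensional irreducible can occur'' by combining highest-weight theory with a Plücker-type quadratic relation, but this exclusion \emph{is} the theorem; everything before it is soft. Moreover, the one concrete claim you make en route is a non sequitur: a $\fg$-equivariant surjection $\Lambda^2 V_\CC \twoheadrightarrow \fg_\CC$ does not endow $V_\CC$ with an invariant symmetric form, nor does it embed $\fg_\CC$ into $\fso(V_\CC)$. (It tells you $\fg_\CC$ is a summand of $\Lambda^2 V_\CC$ and, dually, of $\Lambda^2 V_\CC^*$, which does not by itself give $V_\CC\cong V_\CC^*$.) Self-duality of $V$, i.e.\ the existence of an invariant inner product on a simple Lie 3-algebra, is a \emph{consequence} of the classification, not an a priori input, so as written this part of the argument is circular. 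To complete the proof along these lines you would need either to establish the invariant form independently or to run the weight-theoretic exclusion without it --- which is essentially what Ling's thesis does, and it is a substantial piece of work rather than a routine verification.
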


It is plain to see that simple real Lie 3-algebras admit invariant
metrics of any signature: euclidean, lorentzian or split.  Indeed, the
Lie 3-algebra in \eqref{eq:simple-3-Lie} leaves invariant the diagonal
metric with entries $(1/\lambda_1, 1/\lambda_2, 1/\lambda_3,
1/\lambda_4)$.  One can further change to a basis where the
$\lambda_i$ are signs.  In particular this shows that up to homothety
(i.e., a rescaling of the inner product) there are unique simple metric
Lie 3-algebras with euclidean and lorentzian signatures, corresponding
to choosing $\lambda_i$ to be $(1,1,1,1)$ and $(-1,1,1,1)$,
respectively.  The euclidean case is the original Lie 3-algebra which
was used in Appendix~A of \cite{BL2}.

The image $[V,V,V] \subset V$ of $\Phi: \Lambda^3 V \to V$ is an ideal
called the \textbf{derived ideal} of $V$.  Another ideal is
provided by the \textbf{centre} $Z$, defined by
\begin{equation*}
  Z = \left\{z \in V\middle | [z,x,y]=0,~\forall x,y\in V \right\}~.
\end{equation*}
In other words, $[Z,V,V]=0$.  More generally the \textbf{centraliser}
$Z(W)$ of a subspace $W \subset V$ is defined by
\begin{equation*}
  Z(W) = \left\{z \in V\middle | [z,w,y]=0,~\forall w\in W,y\in V
  \right\}~,
\end{equation*}
or equivalently $[Z(W),W,V]=0$ (thus $Z(V)=Z$).  It follows from the Jacobi identity
\eqref{eq:3-Lie} that $Z(W)$ is a subalgebra.

From now on let $(V,\Phi,b)$ be a metric Lie 3-algebra.  If $W
\subset V$ is any subspace, we define
\begin{equation*}
  W^\perp = \left\{v \in V\middle | \left<v,w\right>=0~,\forall w\in
      W\right\}~.
\end{equation*}
Notice that $(W^\perp)^\perp = W$.  We say that $W$ is
\textbf{nondegenerate}, if $W \cap W^\perp = 0$, whence $V = W \oplus
W^\perp$; \textbf{isotropic}, if $W \subset W^\perp$; and
\textbf{coisotropic}, if $W \supset W^\perp$.  Of course, in
positive-definite signature, all subspaces are nondegenerate.  A
metric Lie 3-algebra is said to be \textbf{indecomposable} if it is
not isomorphic to a direct sum of orthogonal ideals or, equivalently,
if it does not possess any proper nondegenerate ideals: for if $I\lhd
V$ is nondegenerate, $V = I \oplus I^\perp$ is an orthogonal direct
sum of ideals.

The proof of the following lemma is routine.

\begin{lemma}\label{le:coisoquot}
  Let $I\lhd V$ be a coisotropic ideal of a metric Lie 3-algebra.
  Then $I/I^\perp$ is a metric Lie 3-algebra.
\end{lemma}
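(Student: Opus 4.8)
The plan is to exhibit $I/I^\perp$ first as a Lie 3-algebra and then to equip it with a nondegenerate invariant inner product induced from $b$. The crucial preliminary observation is that $I^\perp$ is itself an ideal of $V$, and this is the only place where the metric structure is genuinely used; I expect it to be the heart of the argument. To see it I would take $x \in I^\perp$ and $y,z \in V$ and check that $[x,y,z] \perp I$. Pairing against an arbitrary $w \in I$ and using the invariance \eqref{eq:metric-3-Lie} with respect to $\ad_{y,z}$ together with the total antisymmetry of the bracket gives
\begin{equation*}
  \left<[x,y,z],w\right> = -\left<x,[y,z,w]\right>~.
\end{equation*}
Since $w \in I$ and $I \lhd V$, the element $[y,z,w]$ lies in $I$, while $x \in I^\perp$; hence the right-hand side vanishes and $[x,y,z] \in I^\perp$, so that $I^\perp \lhd V$.

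Granting that $I^\perp \lhd V$, and using that $I$ is coisotropic so that $I^\perp \subset I$, it follows that $I^\perp$ is an ideal of the subalgebra $I$, because $[I^\perp, I, I] \subset [I^\perp, V, V] \subset I^\perp$. The quotient vector space $I/I^\perp$ therefore inherits a Lie 3-algebra structure via $[x + I^\perp, y + I^\perp, z + I^\perp] = [x,y,z] + I^\perp$; well-definedness is immediate from $[I^\perp, I, I] \subset I^\perp$, and the 3-Jacobi identity \eqref{eq:3-Lie} descends to the quotient since it already holds in $I$.

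It remains to produce the inner product. I would restrict $b$ to $I$ and observe that its radical is precisely $I \cap I^\perp = I^\perp$, the last equality because $I^\perp \subset I$. Consequently the induced symmetric bilinear form $\bar b(x + I^\perp, y + I^\perp) = \left<x,y\right>$ on $I/I^\perp$ is well-defined and nondegenerate. Finally, the invariance condition \eqref{eq:metric-3-Lie} for $\bar b$ follows verbatim from that of $b$ upon evaluating on representatives in $I$, so $(I/I^\perp, \bar\Phi, \bar b)$ is a metric Lie 3-algebra. Apart from the ideal computation highlighted above, every step is the routine bookkeeping of passing structures to a quotient by a radical.
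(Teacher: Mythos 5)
Your proof is correct and is precisely the routine argument the paper has in mind (it omits the proof, declaring it routine). Your key preliminary step, that $I^\perp\lhd V$, is exactly the paper's own Proposition~\ref{pr:ideals}~(1), established there by the same computation $\left<[x,y,z],w\right>=-\left<x,[y,z,w]\right>$, and the remaining quotient bookkeeping is as you describe.
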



\begin{lemma}
  Let $V$ be a metric Lie 3-algebra.  Then the centre is the
  orthogonal subspace to the derived ideal; that is,
  $[V,V,V]=Z^\perp$.
\end{lemma}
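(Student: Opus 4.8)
The plan is to prove the equivalent dual statement $Z = [V,V,V]^\perp$ and then take orthogonal complements, using the fact noted earlier in the excerpt that $(W^\perp)^\perp = W$ for any subspace. Applying this to $W = [V,V,V]$ immediately yields $[V,V,V] = ([V,V,V]^\perp)^\perp = Z^\perp$, which is the claim. So everything reduces to identifying the centre with the orthogonal complement of the derived ideal.

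To do this I would unwind the two definitions and slide between them using metric invariance \eqref{eq:metric-3-Lie}. A vector $z$ lies in $[V,V,V]^\perp$ precisely when $\left<[x,y,w],z\right> = 0$ for all $x,y,w\in V$, since the elements $[x,y,w]$ span the derived ideal. The invariance condition \eqref{eq:metric-3-Lie} lets me move the inner derivation $\ad_{x,y}$ off $w$ and onto $z$, rewriting $\left<[x,y,w],z\right> = -\left<w,[x,y,z]\right>$. Hence $z \in [V,V,V]^\perp$ if and only if $\left<w,[x,y,z]\right> = 0$ for all $w,x,y\in V$.

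The crux is then collapsing this condition to centrality. Because $b$ is nondegenerate, the vanishing of $\left<w,[x,y,z]\right>$ for all $w$ forces $[x,y,z] = 0$; and since this holds for all $x,y$, and the 3-bracket is totally antisymmetric so that $[x,y,z] = [z,x,y]$, we conclude $[z,x,y] = 0$ for all $x,y$, i.e.\ $z \in Z$. Conversely every central $z$ clearly satisfies the perpendicularity condition, so $Z = [V,V,V]^\perp$ and the lemma follows.

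I do not anticipate a genuine obstacle here: the argument is a direct transcription of the familiar Lie-algebra fact that, under an invariant metric, the centre is the annihilator of the derived ideal. The only points needing a little care are bookkeeping the antisymmetry of the 3-bracket when passing from $[x,y,z]=0$ to $z \in Z$, and remembering that nondegeneracy alone (rather than positive-definiteness) suffices to deduce $[x,y,z]=0$ from its pairing against every $w$.
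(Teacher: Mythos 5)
Your proposal is correct and follows essentially the same route as the paper: prove $Z = [V,V,V]^\perp$ by using metric invariance \eqref{eq:metric-3-Lie} to move the inner derivation across the pairing, invoke nondegeneracy to conclude $[x,y,z]=0$, and then pass to orthogonal complements via $(W^\perp)^\perp = W$. No gaps.
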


\begin{proof}
  Let $z\in Z$, then for all $x,y,w\in V$, $0 = \left<[x,y,z],w\right>
  = - \left<[x,y,w],z\right>$, whence $z \in [V,V,V]^\perp$ and
  $Z\subset [V,V,V]^\perp$.  Conversely, let $z \in [V,V,V]^\perp$.
  This means that for all $x,y,w\in V$,
  \begin{equation*}
    0 = \left<z,[x,y,w]\right> = - \left<[x,y,z],w\right>~,
  \end{equation*}
  which implies that $[x,y,z]=0$ for all $x,y\in V$ and hence $z \in Z$.
  In other words, $[V,V,V]^\perp \subset Z$.
\end{proof}

\begin{proposition}\label{pr:ideals}
  Let $V$ be a metric Lie 3-algebra and $I \lhd V$ be an ideal.
  Then
  \begin{enumerate}
  \item $I^\perp \lhd V$ is also an ideal;
  \item $I^\perp\lhd Z(I)$; and
  \item if $I$ is minimal then $I^\perp$ is maximal.
  \end{enumerate}
\end{proposition}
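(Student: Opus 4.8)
The plan is to reduce all three statements to a single structural observation: the canonical $4$-form $F(x,y,z,w)=\left<[x,y,z],w\right>$ of \eqref{eq:4-form} is \emph{totally} antisymmetric. Antisymmetry in the first three slots is immediate from $\Phi\colon\Lambda^3V\to V$, while the invariance condition \eqref{eq:metric-3-Lie}, rewritten as $F(x,y,z,w)=-F(x,y,w,z)$, supplies antisymmetry under the remaining transposition; together these generate all of $S_4$. Once $F\in\Lambda^4V^*$ is established, the arguments become the mechanical remark that a bracket paired against $I^\perp$ may be traded, up to sign, for $I^\perp$ paired against another bracket, and that a bracket keeps at least one argument inside $I$ whenever one of its entries lies in $I$ (using antisymmetry of the bracket to reorder that entry into the first slot).

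For part (1), I would take $v\in I^\perp$ and $x,y\in V$ and test $[v,x,y]$ against an arbitrary $w\in I$: total antisymmetry of $F$ gives $\left<[v,x,y],w\right>=-\left<[w,x,y],v\right>$. Since $I\lhd V$ we have $[w,x,y]\in I$, so this pairing vanishes against $v\in I^\perp$; as $w$ was arbitrary, $[v,x,y]\in I^\perp$, i.e. $[I^\perp,V,V]\subset I^\perp$. Part (2) has two halves. First I would check the containment $I^\perp\subset Z(I)$: for $v\in I^\perp$, $w\in I$, $y\in V$ and any $u\in V$, the same manipulation yields $\left<[v,w,y],u\right>=-\left<[u,w,y],v\right>$ with $[u,w,y]\in I$, hence zero; nondegeneracy of $b$ then forces $[v,w,y]=0$, so $v$ centralises $I$. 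Given this, and since $Z(I)$ is a subalgebra, the ideal property $[I^\perp,Z(I),Z(I)]\subset[I^\perp,V,V]\subset I^\perp$ is inherited directly from part (1).

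Part (3) I would handle purely formally, exploiting that $W\mapsto W^\perp$ is inclusion-reversing with $(W^\perp)^\perp=W$ and, by part (1), sends ideals to ideals. Suppose $J\lhd V$ satisfies $I^\perp\subset J$; taking perpendiculars produces an ideal $J^\perp$ with $J^\perp\subset(I^\perp)^\perp=I$. Minimality of $I$ leaves $J^\perp\in\{0,I\}$, and applying $\perp$ once more gives $J\in\{V,I^\perp\}$, which is precisely maximality of $I^\perp$.

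The genuinely substantive step is the containment $I^\perp\subset Z(I)$ in part (2): it is the only place where nondegeneracy of the inner product enters essentially, letting one pass from $\left<[v,w,y],u\right>=0$ for all $u$ to the vanishing of $[v,w,y]$ itself. Everything else is bookkeeping once the total antisymmetry of $F$ is in hand, so I would prove that antisymmetry carefully at the outset and then let it drive the remaining arguments.
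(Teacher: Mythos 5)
Your proof is correct and follows essentially the same route as the paper's: all three parts rest on the total antisymmetry of the $4$-form $F$ (which the paper uses implicitly) together with the inclusion-reversing property of $W\mapsto W^\perp$. Your only additions are making that antisymmetry explicit at the outset and spelling out why $I^\perp$ is an \emph{ideal} of $Z(I)$ rather than merely contained in it — a point the paper leaves implicit — so this is a slightly more complete write-up of the same argument.
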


\begin{proof}
  \begin{enumerate}
  \item For all $x,y\in V$, $u\in I$ and $v\in I^\perp$,
    $\left<[v,x,y],u\right> = - \left<[x,y,u],v\right> = 0$, since
    $[x,y,u]\in I$.  Therefore $[v,x,y] \in I^\perp$.
  \item For all $u \in I^\perp$, $v\in I$ and $x,y\in V$, consider
    $\left<[u,v,x],y\right> = - \left<[x,y,v],u\right> =0$ since $I$
    is an ideal, which means that $[u,v,x]=0$, whence
    $[I,I^\perp,V]=0$.
  \item Let $J \supset I^\perp$ be an ideal.  Taking perpendiculars,
    $J^\perp \subset I$.  Since $I$ is minimal, $J^\perp = 0$ or
    $J^\perp = I$, whence $J = V$ or $J= I^\perp$ and $I^\perp$ is
    maximal.
  \end{enumerate}
\end{proof}

\begin{remark}\label{re:degip}
  Although we have been assuming that the inner product is
  nondegenerate, let us make a remark concerning the possibility of a
  degenerate inner product.  Let $V^\perp$ denote the radical of the
  inner product; that is $x \in V^\perp$ if $\left<x,y\right>=0$ for all
  $y\in V$.  It follows immediately by invariance of the inner product
  that $V^\perp \lhd V$ is an ideal.  This means that if $F$ is the
  4-form in \eqref{eq:4-form}, then $F(x,-,-,-)=0$ for all $x\in
  V^\perp$.  In other words, the 4-form is the pull-back of the 4-form
  on the quotient metric Lie 3-algebra $V/V^\perp$.  This means that the
  degrees of freedom corresponding to $V^\perp$ seem to effectively
  decouple, yielding a Bagger--Lambert Lagrangian with Lie 3-algebra
  $V/V^\perp$.  Of course, the symmetry transformations themselves (that
  do not involve the metric) will generally mix up the degrees of
  freedom on $V/V^\perp$ and $V^\perp$, though clearly the truncation to
  $V/V^\perp$ is consistent.
\end{remark}

\subsection{Structure of metric Lie 3-algebras}
\label{sec:structure-metric-3}

We now investigate the structure of metric Lie 3-algebras.  As in
the case of Lie algebras \cite{MedinaRevoy,FSsug,FSalgebra}, there is
a subtle interplay between ideals and the inner product.

If a Lie 3-algebra is not simple or one-dimensional, then it has a
proper ideal and hence a minimal ideal.  Let $I\lhd V$ be a minimal
ideal of a metric Lie 3-algebra.  Then $I \cap I^\perp$, being an
ideal contained in $I$, is either $0$ or $I$.  In other words, minimal
ideals are either nondegenerate or isotropic.  If nondegenerate, $V = 
I \oplus I^\perp$ is decomposable.  Therefore if $V$ is
indecomposable, $I$ is isotropic.  Moreover, by
Proposition~\ref{pr:ideals} (2), $I$ is abelian and furthermore, because
$I$ is isotropic, $[I,I,V]=0$.

It follows that if $V$ is euclidean and indecomposable, it is either
one-dimensional or simple, whence of the form \eqref{eq:simple-3-Lie}
with all $\lambda_i$ positive.  As we will see below, one can choose
an orthogonal (but not orthonormal) basis for $V$ where the
$\lambda_i$ are equal to $1$.  This result, originally due to
\cite{NagykLie}, was conjectured in \cite{FOPPluecker}, both of which
also treat the case of Lie $n$-algebras for $n>3$.  This result has been
rediscovered more recently for Lie 3-algebras in \cite{GP3Lie,GG3Lie}
and for $n>3$ in \cite{GPkLie}.  Here it is seen to follow structurally
as a corollary of Theorem~\ref{th:simple}.  The result for $n>3$ also
follows structurally from the $n>3$ version of that theorem along the
same lines.

Let $V$ be an indecomposable metric Lie 3-algebra.  Then $V$ is
either simple, one-dimensional, or possesses a proper minimal ideal
$I$ which is isotropic and obeys $[I,I,V]=0$.  The perpendicular ideal
$I^\perp$ is maximal and hence by Lemma~\ref{le:simplequot}, $U :=
V/I^\perp$ is simple or one-dimensional, whereas by
Lemma~\ref{le:coisoquot}, $W:=I^\perp/I$ is a metric Lie 3-algebra.

The inner product on $V$ induces a nondegenerate pairing $g: U \otimes
I \to \RR$.  Indeed, let $[u] = u + I^\perp \in U$ and $v\in I$.  Then
we define $g([u],v) = \left<u,v\right>$, which is clearly independent
of the coset representative for $[u]$.  In particular, $I \cong U^*$
is either one- or four-dimensional.  If the signature of the metric of
$W$ is $(p,q)$, that of $V$ is $(p+k,q+k)$ where $k = \dim I = \dim
U$.  So that if $V$ is to have lorentzian signature, $k = 1$ and $W$
must be euclidean; although not necessarily indecomposable.

In the next section we will classify indecomposable lorentzian Lie
3-algebras.  The technique is analogous to the classification of
indecomposable metric Lie algebras given in \cite{MedinaRevoy} and its
refinement in \cite{FSsug,FSalgebra}.  The lorentzian Lie algebras
have been classified in \cite{MedinaLorentzian} (see also
\cite[§2.3]{FSPL}).  By the same techniques it is possible
\cite{2p3Lie} to classify metric Lie 3-algebras with signature $(2,*)$
and to prove a structure theorem for the case of general signature.
Similarly it is possible to classify lorentzian Lie $n$-algebras for
$n>3$ \cite{JMFLorNLie}.

\section{Lorentzian Lie 3-algebras}
\label{sec:lorentzian-3-lie}

A lorentzian Lie 3-algebra decomposes into one lorentzian
indecomposable factor and zero or more indecomposable euclidean
factors.  As discussed above, the indecomposable euclidean Lie
3-algebras are either one-dimensional or simple.  On the other hand, an
indecomposable lorentzian Lie 3-algebra is either one-dimensional,
simple or else possesses a one-dimensional isotropic minimal ideal.
It is this latter case which remains to be treated and we do so now.

The quotient Lie 3-algebra $U=V/I^\perp$ is also one-dimensional.
Let $u \in V$ be such that $u \not\in I^\perp$, whence its image in
$U$ generates it.  Because $I \cong U^*$, there is $v \in I$ such that
$\left<u,v\right> = 1$.  Complete it to a basis $(v,x_a)$ for
$I^\perp$.  Then $(u,v,x_a)$ is a basis for $V$, with $(x_a)$ spanning
a subspace isomorphic to $W=I^\perp/I$ and which, with a slight abuse
of notation, we will also denote $W$.  It is possible to choose $u$ so
that $\left<u,u\right> = 0$ and such that $\left<u,x\right>=0$ for all
$x\in W$.  Indeed, given any $u$, the map $x \mapsto \left<u,x\right>$
defines an element in the dual $W^*$.  Since the restriction of the
inner product to $W$ is nondegenerate, there is some $z\in W$ such
that $\left<u,x\right> = \left<z,x\right>$ for all $x \in W$.  We let
$u' = u - z$.  This still obeys $\left<u',v\right>=1$ and now also
$\left<u',x\right>=0$ for all $x\in W$.  Finally let $u'' = u' - \half
\left<u',u'\right> v$, which still satisfies $\left<u'',v\right> = 1$,
$\left<u'',x\right>=0$ for all $x\in W$, but now satisfies
$\left<u'',u''\right>=0$ as well.

In this basis, there are four kinds of 3-brackets: $[u,v,x]$,
$[u,x,y]$, $[v,x,y]$ and $[x,y,z]$ where $x,y,z\in W$.  From 
Proposition~\ref{pr:ideals} (2), it is immediate that
$[u,v,x]=0=[v,x,y]$, whence $v$ is central.  In summary, the only
nonzero 3-brackets are, using the summation convention,
\begin{equation}
  \label{eq:3brackets}
  [u,x_a,x_b] = f_{ab}{}^c x_c \qquad\text{and}\qquad
  [x_a,x_b,x_c] = - f_{abc} v + \phi_{abc}{}^d x_d~,
\end{equation}
where $f_{abc} = \left<[u,x_a,x_b],x_c\right>$. (Notice that an
additional $\omega_{ab} v$ term that might have occurred on the right hand
side of the first 3-bracket must vanish by taking the inner product with
$u$.)  The 3-Jacobi identity is equivalent to the following two
conditions:
\begin{enumerate}
\item $[x_a,x_b] := f_{ab}{}^c x_c$ defines a Lie algebra structure on
  $W$, which leaves the inner product invariant due to the
  skewsymmetry of $f_{abc} = \left<[x_a,x_b],x_c\right>$; and
\item $[x_a,x_b,x_c]_W := \phi_{abc}{}^d x_d$ defines a euclidean
  Lie 3-algebra structure on $W$ which is ad-invariant with respect
  to the Lie algebra structure.
\end{enumerate}

We will show below that for $V$ indecomposable, $\phi\equiv 0$ so that
the Lie 3-algebra structure on $W$ is abelian, but not before
discussing a family of Lie algebras associated to every Lie
3-algebra.

\subsection{A family of metric Lie algebras}
\label{sec:family-lie}

Let $(V,\Phi)$ be a Lie 3-algebra.  It was already observed in
\cite{Filippov} that every $z\in V$ defines a bracket $[-,-]_z:
\Lambda^2V \to V$ by
\begin{equation}
  \label{eq:Lie-z}
  [x,y]_z := [x,y,z]~,
\end{equation}
which obeys the Jacobi identity as a consequence of the 3-Jacobi
identity \eqref{eq:3-Lie}.  Thus $[-,-]_z$ defines on $V$ a Lie
algebra structure for which $z$ is a central element.  In other words,
a Lie 3-algebra $V$ defines a family of Lie algebras on $V$
parametrised linearly by $V$ itself.  Letting $\eL_V \subset
\Lambda^2V^* \otimes V$ denote the space of Lie algebra structures on
$V$, a Lie 3-algebra structure on $V$ defines a linear embedding $V
\into \Lambda^2V^*\otimes V$ whose image lies in $\eL_V$.  Although it
would be tempting to characterise Lie 3-algebras in this way, it is
known \cite{BremnerNLie,RotkiewiczNLie} however that this condition is
strictly weaker than the 3-Jacobi identity.  It is not known whether
this is still the case for metric Lie 3-algebras.

If $(V,\Phi,b)$ is a metric Lie 3-algebra, then each of the Lie algebras
$(V,[-,-]_z,b)$ is a metric Lie algebra.  Let $V$ be a simple euclidean
Lie 3-algebra.  It is possible to change to a basis
$(\be_1,\dots,\be_4)$ where the 3-bracket is
\begin{equation}
  \label{eq:simple3Lienf}
  [\be_i,\be_j,\be_k] = \varepsilon_{ijk\ell} \be_\ell~,
\end{equation}
using the summation convention.  Moreover, such a basis is orthogonal,
but not necessarily orthonormal.  Thus there is a one parameter family
of such metric Lie 3-algebras, distinguished by the scale of the inner
product.  We will denote the simple Lie 3-algebra with the above
3-brackets by $\fs$.  Fixing any nonzero $x \in \fs$, the Lie algebra
$[-,-]_x$ is isomorphic to $\fso(3) \oplus \RR$, where the $\fso(3)$ is
the orthogonal Lie algebra in the perpendicular complement of the line
containing $x$.  Under the adjoint action of this Lie algebra, the
vector space $\fs$ decomposes into $\fs = \RR x \oplus x^\perp$.

\subsection{Indecomposable lorentzian Lie 3-algebras}
\label{sec:indec-lor}

We are now ready to classify the indecomposable lorentzian Lie
3-algebras.  We have previously shown that such an algebra is given in a
basis $(u,v,x_a)$ by the 3-bracket in \eqref{eq:3brackets}.  We will
now show that if $V$ is indecomposable, then $\phi$ necessarily
vanishes.

The tensor $\phi_{abc}{}^d$ defines a euclidean Lie 3-algebra
structure on $W$.  The most general euclidean Lie 3-algebra is an
orthogonal direct sum $W = \fa \oplus \fs_1 \oplus \dots \oplus
\fs_m$, where $\fa$ is an $n$-dimensional abelian Lie 3-algebra and
the $\fs_i$ are $m$ copies of the simple Lie 3-algebra with 3-brackets
given by \eqref{eq:simple3Lienf}.  The inner product is such that the
above direct sums are orthogonal, and the inner products on each of
the factors is positive-definite.

The \emph{Lie algebra} structure on $W$ is such that its adjoint
representation preserves both the 3-brackets and the inner product,
whence $\ad W$ is contained in $\fso(\fa) \oplus \fso(\fs_1) \oplus
\dots \oplus \fso(\fs_m)$.  Indeed, for any $x\in W$, $\ad_x$
preserves the Lie 3-bracket, whence also the ``volume'' forms on each
of the simple factors.  In turn this means that $\ad_x$ preserves the
subspaces $\fs$ themselves.  To see this, let
$(\be_1,\be_2,\be_3,\be_4)$ be a basis for one of the simple factors,
say $\fs_1$, and let $\be_1\wedge \be_2 \wedge \be_3 \wedge \be_4$ be
the corresponding volume form.  Invariance under $\ad_x$ means
\begin{equation*}
  [x,\be_1] \wedge \be_2 \wedge \be_3 \wedge \be_4 + 
  \be_1 \wedge [x,\be_2] \wedge \be_3 \wedge \be_4 + 
  \be_1 \wedge \be_2 \wedge [x,\be_3] \wedge \be_4 + 
  \be_1 \wedge \be_2 \wedge \be_3 \wedge [x,\be_4] = 0~.
\end{equation*}
Now by invariance of the inner product, $[x,\be_i] \perp \be_i$,
whence we may write it as $[x,\be_i] = y_i + z_i$, where $y_i \in
\fs_1 \cap \be_i^\perp$ and $z_i \in \fs_1^\perp$.  Back into the
above equation,
\begin{equation*}
  z_1 \wedge \be_2 \wedge \be_3 \wedge \be_4 + 
  \be_1 \wedge z_2 \wedge \be_3 \wedge \be_4 + 
  \be_1 \wedge \be_2 \wedge z_3 \wedge \be_4 + 
  \be_1 \wedge \be_2 \wedge \be_3 \wedge z_4 = 0~.
\end{equation*}
Each of the above four terms is linearly independent, whence $z_i=0$
and $\ad_x$ indeed preserves $\fs_1$.  This means that each simple
factor is a submodule of the adjoint representation and, hence that so
is their direct sum.  Finally, by invariance of the inner product, so
is its perpendicular complement $\fa$.  In other words, the adjoint
representation is contained in $\fso(\fa) \oplus \fso(\fs_1) \oplus
\dots \oplus \fso(\fs_m)$.

This decomposition of the adjoint representation now implies a
decomposition of the Lie algebra itself as $W = \fg \oplus \fh_1
\oplus \dots \oplus \fh_m$, where $\fg$ is an $n$-dimensional
euclidean Lie algebra (i.e., with $\ad \fg < \fso(\fa)$) and each
$\fh_i$ is a four-dimensional euclidean Lie algebra (i.e., $\ad\fh_i <
\fso(\fs)$).  Indeed, if $x$ and $y$ belong to different orthogonal
summands of the vector space $W$, then $[x,y]$ belongs to the same
summand as $y$ when understood as $\ad_x(y)$ and to the same summand
as $x$ when understood as $\ad_y(x)$.  Since these summands are
orthogonal, $[x,y]=0$.

Now euclidean Lie algebras are reductive; that is, a direct sum of a
compact semisimple Lie algebra and an abelian Lie algebra.  By
inspection there are precisely two isomorphism classes of
four-dimensional euclidean Lie algebras: the abelian $4$-dimensional
Lie algebra $\RR^4$ and $\fso(3) \oplus \RR$.  Hence $\fso(\fs)$ has
to be isomorphic to one of those.

We will now show that every $\fs$ summand in $W$ factorises in $V$,
contradicting the assumption that $V$ is indecomposable.

Consider one such $\fs$ summand, say $\fs_1$.  The corresponding Lie
algebra $\fh_1$ is either abelian or isomorphic to $\fso(3) \oplus
\RR$.  If $\fh_1$ is abelian, so that the structure constants
vanish, then for any $x \in \fs_1$, $[u,x,V]=0$ and $[x,y,V]=0$ for
any $y \in W$ perpendicular to $\fs_1$ .  Hence $\fs_1\lhd V$ is a
nondegenerate ideal, contradicting the indecomposability of $V$.

If $\fh_1 \cong \fso(3) \oplus \RR$, its adjoint algebra $\ad\fh_1$ is
an $\fso(3)$ subalgebra of $\fso(\fs_1) \cong \fso(4)$, which
therefore leaves a line $\ell \subset \fs_1$ invariant.  The Lie
algebra structure on $\fh_1$ thus coincides with that given by the Lie
bracket $[-,-]_x = [x,-,-]_W$, for some $x\in \ell$, induced from
the Lie 3-algebra structure on $\fs_1$.  In other words, $[u,y,z] =
[y,z]_x = [x,y,z]_W$ for all $y,z\in W$.  This allows us to ``twist''
$\fs_1$ into a nondegenerate ideal of $V$.   Indeed, define now
\begin{equation}
  \label{eq:newcoords}
  u' = u - x - \half |x|^2 v \qquad\text{and}\qquad
  y' = y + \left<y,x\right> v~,
\end{equation}
for all $y \in \fs_1$.  Then $[u',y',z']= 0$ for all $y,z\in \fs_1$,
and, using that $v$ is central,
\begin{align*}
  [y',z',w'] = [y,z,w] &= - \left<[y,z],w\right> v + [y,z,w]_W\\
  &= - \left<[x,y,z]_W,w\right> v + [y,z,w]_W \\
  &=  \left<[y,z,w]_W,x\right> v + [y,z,w]_W\\
  &= [y,z,w]'_W~.
\end{align*}
Moreover, for every $y\in \fs_1$,
\begin{equation*}
  \left<u', y'\right> = \left<u - x- \half |x|^2 v, y +
    \left<x,y\right> v\right> = \left<x,y\right> \left<u,v\right> -
  \left<x,y\right> = 0~,
\end{equation*}
and finally
\begin{equation*}
  \left<u',u'\right> = \left<u-x-\half |x|^2 v, u-x-\half |x|^2
    v\right> = - |x|^2 \left<u,v\right> + \left<x,x\right> = 0~.
\end{equation*}
In other words, the subspace of $V$ spanned by the $y'$ for
$y\in\fs_1$ is a nondegenerate ideal of $V$, contradicting again the
fact that $V$ is indecomposable.

Consequently there can be no $\fs$'s in $W$, whence as a Lie
3-algebra, $W$ is abelian.  As a Lie algebra it is euclidean, whence
reductive.  However the abelian summand commutes with $u$, hence it is
central in $V$, again contradicting the fact that it is
indecomposable.  Therefore as a Lie algebra $W$ is compact semisimple.

In summary, we have proved the following

\begin{theorem}\label{th:lorentzian}
  Let $(V,\Phi,b)$ be an indecomposable lorentzian Lie 3-algebra.
  Then it is either one-dimensional, simple, or else there is a Witt
  basis $(u,v,x_a)$, with $u,v$ complementary null directions, such
  that the nonzero 3-brackets take the form
  \begin{equation*}
    [u,x_a,x_b] = f_{ab}{}^c x_c \qquad\text{and}\qquad
    [x_a,x_b,x_c] = - f_{abc} v~,
  \end{equation*}
  where $[x_a,x_b] = f_{ab}{}^c x_c$ makes the span of the $(x_a)$
  into a compact semisimple Lie algebra and $f_{abc} =
  \left<[x_a,x_b],x_c\right>$.
\end{theorem}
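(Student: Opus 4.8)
The plan is to invoke the structure theory of Section~\ref{sec:structure-metric-3} to reduce to the single genuinely new case and then to force the tensor $\phi$ in \eqref{eq:3brackets} to vanish. First I would set aside the one-dimensional and simple cases and concentrate on an indecomposable lorentzian $V$ that is neither. Such a $V$ has a minimal ideal $I$, which must be isotropic rather than nondegenerate, since a nondegenerate $I$ would split $V = I \oplus I^\perp$ and violate indecomposability. The induced nondegenerate pairing $U \otimes I \to \RR$, with $U = V/I^\perp$, gives $I \cong U^*$; combined with Lemma~\ref{le:simplequot} (so that $U$ is simple or one-dimensional) and the lorentzian signature, this forces $\dim I = \dim U = 1$ and $W := I^\perp/I$ to be euclidean. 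Proposition~\ref{pr:ideals}(2) gives $[I,I^\perp,V]=0$, and isotropy $I \subset I^\perp$ then yields $[I,I,V]=0$. Passing to a Witt basis $(u,v,x_a)$ reduces the brackets to \eqref{eq:3brackets}, and the 3-Jacobi identity splits into two statements: that $f_{ab}{}^c$ is a metric Lie algebra structure on $W$, and that $\phi_{abc}{}^d$ is a euclidean Lie 3-algebra structure on $W$ invariant under that Lie algebra.

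The crux is to show $\phi \equiv 0$. Here I would use the classification of euclidean Lie 3-algebras implied by Theorem~\ref{th:simple}: as such, $W = \fa \oplus \fs_1 \oplus \cdots \oplus \fs_m$ with $\fa$ abelian and each $\fs_i$ a copy of the simple algebra \eqref{eq:simple3Lienf}. Since $\ad W$ preserves both the inner product and each simple factor's volume form, a short wedge-product computation (showing that the components $z_i \in \fs_i^\perp$ of $[x,\be_j]$ must vanish) shows that each $\fs_i$ is an $\ad W$-submodule. Orthogonality then decomposes the Lie algebra itself as $W = \fg \oplus \fh_1 \oplus \cdots \oplus \fh_m$, with $\fh_i$ acting on $\fs_i$; and since euclidean Lie algebras are reductive, with the only four-dimensional ones being $\RR^4$ and $\fso(3)\oplus\RR$, each $\fh_i$ must be one of these.

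The hard part is showing that every $\fs_i$ splits off as a nondegenerate ideal, contradicting indecomposability. When $\fh_i$ is abelian this is immediate: the relevant structure constants vanish, $[u,\fs_i,V]=0$, and $\fs_i$ is already a nondegenerate ideal. The genuine obstacle is the case $\fh_i \cong \fso(3)\oplus\RR$, where $[u,x_a,x_b]$ is nonzero so that $\fs_i$ is not naively an ideal. The key observation I would exploit is the family of Lie algebras $[-,-]_z$ of Section~\ref{sec:family-lie}: fixing a nonzero $x$ on the $\ad\fh_i$-invariant line $\ell \subset \fs_i$ gives $[-,-]_x \cong \fso(3)\oplus\RR$, which I expect to coincide with $\fh_i$, thereby identifying $[u,y,z]$ with $[x,y,z]_W$ on $\fs_i$. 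This lets me ``twist'' the null direction and the $\fs_i$ vectors via \eqref{eq:newcoords}; checking that the twisted vectors span a nondegenerate ideal is then a direct computation using centrality of $v$ and invariance of the inner product. Pinning down this coincidence of Lie algebra structures and the exact form of the twist is where the real work lies.

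Finally I would assemble the conclusion. With $\phi \equiv 0$, $W$ is abelian as a Lie 3-algebra, so only the brackets $[u,x_a,x_b]$ and $[x_a,x_b,x_c]$ of the theorem survive. As a Lie algebra $W$ is euclidean, hence reductive; but any abelian summand commutes with everything in $W$ and with $u$, making it central in $V$ and again contradicting indecomposability. Therefore $W$ is compact semisimple, and the brackets take the stated form $[u,x_a,x_b]=f_{ab}{}^c x_c$ and $[x_a,x_b,x_c]=-f_{abc}v$.
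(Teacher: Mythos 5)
Your proposal is correct and follows essentially the same route as the paper: the reduction via the isotropic minimal ideal $I$ with $I\cong U^*$ one-dimensional, the splitting of the 3-Jacobi identity into the Lie algebra structure $f$ and the invariant euclidean Lie 3-algebra structure $\phi$ on $W$, the volume-form argument decomposing $\ad W$, the case analysis on $\fh_i$ being $\RR^4$ or $\fso(3)\oplus\RR$ with the twist \eqref{eq:newcoords} producing a nondegenerate ideal, and the final elimination of the abelian reductive summand. The one step you flag as tentative --- that $\fh_i\cong\fso(3)\oplus\RR$ must coincide with $[-,-]_x$ for some $x$ on the invariant line $\ell$ --- is exactly how the paper proceeds, so nothing is missing.
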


These latter Lie 3-algebras have been discovered independently in
\cite{GMRBL,BRGTV,HIM-M2toD2rev}, albeit in some cases in a slightly
different form.  It should be remarked that they provide explicit
counterexamples to the lorentzian conjecture of \cite{FOPPluecker},
simply by taking the semisimple Lie algebra to be anything but a
direct product of $\fso(3)$'s.  Since the main focus in
\cite{FOPPluecker} was on middle-dimensional forms in low dimension,
such examples did not arise.

Paraphrasing the theorem, the class of indecomposable lorentzian Lie
3-algebras are in one-to-one correspondence with the class of euclidean
metric semisimple Lie algebras, by which we mean a compact semisimple
Lie algebra \emph{and} a choice of invariant inner product.  This choice
involves a choice of scale for each simple factor.

A final remark is that the classification of indecomposable lorentzian
Lie 3-algebras is analogous to the classification of indecomposable
lorentzian Lie algebras, which as shown in \cite{MedinaLorentzian}
(see also \cite[§2.3]{FSPL}) are either one-dimensional, simple, or
obtained as a double extension \cite{MedinaRevoy,FSsug,FSalgebra} of
an abelian euclidean Lie algebra $\fg$ by a one-dimensional Lie
algebra acting on $\fg$ via a skew-symmetric endomorphism.  In the Lie
3-algebra case, we have an analogous result, with the action of the
endomorphism being replaced by a semisimple Lie algebra.

\section{The Bagger--Lambert moduli space}
\label{sec:moduli}

\subsection{Basic definitions}
\label{sec:basic-moduli}

The space of (maximally supersymmetric) \textbf{classical vacua} of the
Bagger--Lambert theory associated to a Lie 3-algebra $V$ is defined as
follows:
\begin{equation*}
  \eV = \left\{\phi \in \Hom(\RR^8,V) \middle | [\phi(\bx),\phi(\by),
    \phi(\bz)] = 0~\forall\bx,\by,\bz\in\RR^8\right\}~.
\end{equation*}
In other words, a linear map $\phi: \RR^8 \to V$ belongs to $\eV$
if and only if its image lies in an abelian subalgebra of $V$.
This assumption guarantees that all the Bagger--Lambert
supersymmetry transformations vanish if one sets the gauge field
and fermions to zero and the scalars equal to the constants
$\phi$.  If $A<V$ is an abelian subalgebra, then let us define
$\eV_A := \Hom(\RR^8,A)$, whence
\begin{equation*}
  \eV = \bigcup_{\substack{A<V\\\text{abelian}}} \eV_A~.
\end{equation*}
If $A,B$ are abelian subalgebras of $V$ with $A<B$, then $\eV_A
\subset \eV_B$, whence we can write $\eV$ as a union
\begin{equation*}
  \eV = \bigcup_{\substack{A<V\\\text{maximal abelian}}} \eV_A
\end{equation*}
of maximal subspaces, in the sense that no two subspaces appearing in
the above sum are contained in one another.  We see that $\eV$ is
therefore given by the set union of linear subspaces in $\Hom(\RR^8,V)$,
parametrised by the set of maximal abelian subalgebras of $V$.  In other
words, a maximal abelian subalgebra of $V$ determines a ``branch'' of
the classical space of vacua.  Some of these branches will be
gauge-related, hence the need to quotient by gauge transformations.
Since we have chosen a gauge in which the gauge field vanishes, we are
only allowed to quotient by gauge transformations which preserve this
choice of gauge.  In particular they are constant, whence they define an
(invariant) subgroup of the automorphisms of the Lie 3-algebra.

We define the \textbf{automorphism group} $\Aut V$ of the Lie
3-algebra $V$ to be the subgroup of $\GL(V)$ which preserves the
3-bracket; that is,
\begin{equation*}
  \Aut V = \left\{ g \in \GL(V) \middle | g[x,y,z] = [gx, gy,
    gz],~\forall x,y,z\in V\right\}~.
\end{equation*}
Its Lie algebra $\Der V$ is the Lie subalgebra of $\fgl(V)$ consisting
of derivations of the 3-bracket.  Let $\ad V = \left\{ \ad_{x,y} \middle
  | x,y \in V\right\}$ denote the Lie subalgebra of $\Der V$ consisting
of inner derivations.  It is the Lie algebra of a normal subgroup $\Ad V
\lhd \Aut V$, which we will call the group of \textbf{inner
  automorphisms}; although we should keep in mind that the nomenclature
is somewhat misleading in the absence of a notion of Lie 3-group.  The
group $\Ad V$ is the closest thing one has to a gauge group in the
Bagger--Lambert theory.

$\Aut V$, and hence $\Ad V$, act on $\Hom(\RR^8,V)$ by ignoring the
$\RR^8$ and acting on $V$ via the defining representation.  In other
words, if $g\in\Aut V$ and $\phi:\RR^8 \to V$, then $(g \cdot\phi)(\bx)
= g \phi(\bx)$, for all $\bx \in \RR^8$.  It is clear that if $\phi \in
\eV$, then $g \cdot \phi \in \eV$, whence we may define the (classical)
\textbf{moduli space} of the Bagger--Lambert theory associated with $V$
as the gauge-equivalence classes of vacuum configurations in $\eV$,
where two vacuum configurations are gauge-equivalent if they related by
the action of $\Ad V$.

\subsection{Automorphisms}
\label{sec:autos}

We will now study the automorphisms $\Aut V$ of the indecomposable
lorentzian Lie 3-algebra $V$ built out of a euclidean semisimple Lie
algebra $\fg$ as described in Theorem \ref{th:lorentzian}, paying close
attention to the inner automorphisms $\Ad V$.

As seen above, $V = \RR u \oplus \RR v \oplus \fg$ as a vector space,
with inner product given by extending the inner product on $\fg$ to $V$
in such a way that $u,v$ are orthogonal to $\fg$ and obey
$\left<u,u\right>= 0 = \left<v,v\right>$ and $\left<u,v\right>=1$, and
where the nonzero 3-brackets are given by
\begin{equation*}
  [u,x,y] = [x,y] \qquad\text{and}\qquad [x,y,z]= -
  \left<[x,y],z\right> v~,
\end{equation*}
for all $x,y,z \in \fg$.

Let $\varphi \in \Aut V$.  It follows that $\varphi$ preserves the
centre $Z \lhd V$.  Indeed, if $z \in Z$, then for all $x,y\in V$,
$[\varphi(z), \varphi(x), \varphi(y)] = \varphi[z,x,y] = 0$.  This
means that $[\varphi(z),\varphi(V),\varphi(V)]=0$, but
$\varphi(V)=V$ since $\varphi$ is vector space isomorphism, whence
$\varphi(z) \in Z$.  Similarly, $\varphi$ preserves the derived
ideal $[V,V,V]$.  Indeed, $\varphi[V,V,V] = [\varphi(V),
\varphi(V), \varphi(V)] = [V,V,V]$.  This means that $\varphi$
must take the following form:
\begin{align*}
  \varphi(v) &= \alpha v\\
  \varphi(u) &= \beta u + \gamma v + t\\
  \varphi(x) &= f(x) + \left<w,x\right> v~,
\end{align*}
for all $x \in \fg$ and where $\alpha,\beta,\gamma \in \RR$, $w,t \in
\fg$ and $f: \fg \to \fg$.  Invertibility of $\varphi$ forces
$\alpha,\beta$ to be nonzero and $f$ to be invertible.

We will now determine the most general $\varphi$ preserving the
3-brackets.  For all $x,y,z\in \fg$,
\begin{equation*}
  \varphi[x,y,z] = - \left<[x,y],z\right> \varphi(v) = - \alpha
  \left<[x,y],z\right> v~,
\end{equation*}
but also
\begin{equation*}
  \varphi[x,y,z] = [\varphi(x),\varphi(y),\varphi(z)] = [f(x),f(y),f(z)]
  = - \left<[f(x),f(y)],f(z)\right> v~,
\end{equation*}
whence we arrive at
\begin{equation}
  \label{eq:auto1}
  \left<[f(x),f(y)],f(z)\right> = \alpha \left<[x,y],z\right>~.
\end{equation}
For all $x,y\in \fg$,
\begin{equation*}
  \varphi[u,x,y] = \varphi[x,y] = f[x,y] + \left<w,[x,y]\right> v~,
\end{equation*}
but also
\begin{equation*}
  \varphi[u,x,y] = [\varphi(u),\varphi(x),\varphi(y)] = [\beta u + t,
  f(x),f(y)] = \beta [f(x),f(y)] - \left<t,[f(x),f(y)]\right> v~,
\end{equation*}
which yields two equations
\begin{equation}
  \label{eq:auto2}
  \beta [f(x),f(y)] = f[x,y]~,
\end{equation}
and
\begin{equation}
  \label{eq:auto3}
  \left<w,[x,y]\right> = - \left<t,[f(x),f(y)]\right>~.
\end{equation}
Multiplying equation \eqref{eq:auto1} by $\beta$ (which is nonzero) and
using equation \eqref{eq:auto2}, we arrive at
\begin{equation*}
  \left<f[x,y],f(z)\right> = \alpha\beta \left<[x,y],z\right>~,
\end{equation*}
since $\fg$ is semisimple, $[\fg,\fg]=\fg$, whence this is equivalent to
\begin{equation}
  \label{eq:auto4}
  \left<f(x), f(y)\right> = \alpha\beta \left<x,y\right>~,
\end{equation}
for all $x,y\in\fg$.
Now we multiply equation \eqref{eq:auto3} by $\beta$ and again use
equation \eqref{eq:auto2} to obtain
\begin{equation*}
    \beta \left<w,[x,y]\right> = - \left<t,f[x,y]\right> = - \left<f^*t,
    [x,y]\right>~,
\end{equation*}
and again using that $[\fg,\fg]=\fg$, we see that $w = - \beta^{-1} f^*t$,
where $f^*$ is the adjoint of $f$ relative to the inner product.

Let us define
\begin{equation*}
  \Aut_\lambda\fg = \left\{f : \fg \to \fg~\text{invertible}~ \middle |
    \lambda[f(x),f(y)] = f[x,y]\right\}~.
\end{equation*}
For $\lambda =1$ we have the automorphism group of $\fg$.
The proof of the following lemma is routine.

\begin{lemma}
  If $f \in \Aut_\lambda\fg$ then $f^{-1} \in \Aut_{1/\lambda}\fg$.  If
  in addition $g \in \Aut_\mu\fg$ then $f \circ g \in
  \Aut_{\lambda\mu}\fg$.
\end{lemma}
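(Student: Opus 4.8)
The plan is to verify each of the two assertions directly from the defining relation $\lambda[f(x),f(y)] = f[x,y]$. Both $f^{-1}$ and $f\circ g$ are automatically invertible linear maps (the inverse of an invertible map is invertible, and a composite of invertible maps is invertible), so the only content to check is the twisted homomorphism identity with the correct scaling constant; invertibility itself requires no work.

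For the composite, I would take arbitrary $x,y\in\fg$ and chase the bracket through the two maps. Starting from $(f\circ g)[x,y] = f(g[x,y])$, I would first apply the defining relation for $g$, namely $g[x,y] = \mu[g(x),g(y)]$, and pull the scalar $\mu$ out of $f$ using linearity of $f$; then apply the defining relation for $f$ to $[g(x),g(y)]$, producing a further factor $\lambda$. This yields $(f\circ g)[x,y] = \lambda\mu\,[f(g(x)),f(g(y))]$, which is exactly the condition that $f\circ g\in\Aut_{\lambda\mu}\fg$.

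For the inverse, I would substitute $x\mapsto f^{-1}(x)$ and $y\mapsto f^{-1}(y)$ into the defining relation for $f$, so that $f(f^{-1}(x)) = x$ and similarly for $y$, giving $\lambda[x,y] = f[f^{-1}(x),f^{-1}(y)]$. Applying $f^{-1}$ to both sides and using its linearity to move the scalar $\lambda$ outside then rearranges to $f^{-1}[x,y] = \tfrac{1}{\lambda}[f^{-1}(x),f^{-1}(y)]$, i.e. $f^{-1}\in\Aut_{1/\lambda}\fg$.

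There is no genuine obstacle here, in keeping with the claim that the proof is routine; the only point requiring a little care is to invoke linearity of $f$ (and of $g$) at the moment one extracts the scalar factors, and to track the order in which the two twisting constants accumulate, which is what produces the product $\lambda\mu$ for the composite and the reciprocal $1/\lambda$ for the inverse.
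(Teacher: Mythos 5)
Your proof is correct and is exactly the routine verification the paper has in mind (the paper omits the proof entirely, declaring it routine): both identities follow by chasing the bracket through the defining relation and using linearity to extract the scalars. No issues.
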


Equation \eqref{eq:auto2} says that $f \in \Aut_\beta\fg$.  Now if
$f_1,f_2 \in \Aut_\beta\fg$, the lemma says that $f_1^{-1} \circ f_2
\in \Aut\fg$, whence any two elements of $\Aut_\beta\fg$ are related
by composition with an automorphism.  Now the map $x \mapsto \beta^{-1} x$
is invertible and belongs to $\Aut_\beta\fg$.  Hence the most general
solution to equation \eqref{eq:auto2} is given by $f(x) = \beta^{-1}
a(x)$ for some Lie algebra automorphism $a\in\Aut\fg$.  Substituting
this into equation \eqref{eq:auto4}, we find
\begin{equation}
  \label{eq:auto5}
  \left<a(x),a(y)\right> = \alpha\beta^3\left<x,y\right>~.
\end{equation}
Since the inner product is positive definite, this means that
$\alpha\beta^3 > 0$.

Now decompose $\fg= \fg_1 \oplus \dots \oplus \fg_N$ into simple
factors.  On each of the factors, the inner product is a (negative)
multiple of the Killing form, which is preserved by automorphisms.
Therefore if $a \in \Aut\fg$ preserves each of the factors (this is
the case, e.g., if no two factors are isomorphic) then it preserves the
inner product and we see that $\alpha\beta^3 = 1$.  If $a$ does not
preserve the factors, it permutes them as well as acting by
automorphisms of each of the factors.  However, since $N$ is finite,
some power of $a$ will again preserve the factors and hence some power
of $\alpha\beta^3$ must be equal to $1$, but since $\alpha\beta^3>0$ we
again conclude that $\alpha\beta^3=1$, whence equation \eqref{eq:auto5}
says that $a$ is an isometry.  Let us denote by $\Aut^0\fg$ the
subgroup of automorphisms which are also isometries.  In summary, we
have proved the following

\begin{proposition}
  Every 3-algebra automorphism $\varphi \in \Aut V$ is given by
  \begin{align*}
    \varphi(v) &= \beta^{-3} v\\
    \varphi(u) &= \beta u + \gamma v + t\\
    \varphi(x) &= \beta^{-1} a(x) - \beta^{-2} \left<t,a(x)\right> v~,
  \end{align*}
  for all $x\in\fg$ and where $\beta\in\RR^\times$, $\gamma\in\RR$,
  $t\in\fg$ and $a\in\Aut^0\fg$.
\end{proposition}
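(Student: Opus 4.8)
The plan is to assemble the constraints already extracted from the requirement that $\varphi$ preserve the 3-bracket and to read off the parametrization by eliminating the redundant data. I would start from the general form established above,
\begin{equation*}
  \varphi(v) = \alpha v, \qquad \varphi(u) = \beta u + \gamma v + t, \qquad \varphi(x) = f(x) + \left<w,x\right> v,
\end{equation*}
with $\alpha,\beta$ nonzero and $f$ invertible, together with the relations \eqref{eq:auto2} and \eqref{eq:auto4} and the identity $w = -\beta^{-1} f^* t$. The goal is to trade the auxiliary data $\alpha$, $f$, $w$ for $\beta$, $\gamma$, $t$ and a single isometric automorphism $a$.

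First I would invoke \eqref{eq:auto2}, which says $f \in \Aut_\beta\fg$, together with the composition lemma: the scaling $x \mapsto \beta^{-1} x$ already lies in $\Aut_\beta\fg$, so any $f \in \Aut_\beta\fg$ differs from it by a genuine automorphism, giving $f(x) = \beta^{-1} a(x)$ with $a \in \Aut\fg$. Feeding this into \eqref{eq:auto4} produces \eqref{eq:auto5}, namely $\left<a(x),a(y)\right> = \alpha\beta^3 \left<x,y\right>$, and positive-definiteness of the inner product immediately gives $\alpha\beta^3 > 0$.

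The only genuinely non-formal step is to upgrade this to $\alpha\beta^3 = 1$, which then forces $a$ to be an isometry and $\alpha = \beta^{-3}$. Here I would decompose $\fg$ into simple factors and use that on each factor the inner product is a multiple of the Killing form, hence preserved by every automorphism. If $a$ stabilizes each factor it is an isometry and $\alpha\beta^3 = 1$ at once; the subtlety is that $a$ may permute isomorphic factors, but since there are only finitely many, some power $a^k$ stabilizes them all, whence $(\alpha\beta^3)^k = 1$ and positivity forces $\alpha\beta^3 = 1$. This dichotomy between the factor-preserving and factor-permuting cases is where the argument requires actual input rather than bookkeeping.

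Finally I would compute $w$ from $w = -\beta^{-1} f^* t = -\beta^{-2} a^* t$, so that $\left<w,x\right> = -\beta^{-2}\left<t,a(x)\right>$; collecting the three pieces gives exactly $\varphi(v) = \beta^{-3} v$, $\varphi(u) = \beta u + \gamma v + t$ and $\varphi(x) = \beta^{-1} a(x) - \beta^{-2}\left<t,a(x)\right> v$ with $a \in \Aut^0\fg$. A routine verification that every such map indeed respects the 3-brackets supplies the converse and closes the characterization.
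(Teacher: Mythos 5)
Your proposal is correct and follows essentially the same route as the paper: the same reduction $f = \beta^{-1}a$ via the composition lemma for $\Aut_\lambda\fg$, the same passage from \eqref{eq:auto4} to \eqref{eq:auto5} with positivity giving $\alpha\beta^3>0$, the same finite-order argument on permuted simple factors to force $\alpha\beta^3=1$, and the same elimination of $w$ via $w=-\beta^{-2}a^*t$. No gaps.
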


Let $\Aut^0V$ denote the subgroup of 3-algebra automorphisms of $V$
which also preserve the inner product.  It is easy to determine such
automorphisms.

\begin{proposition}\label{pr:Aut0V}
  Every 3-algebra automorphism $\varphi \in \Aut^0 V$ preserving the
  inner product is given by
  \begin{align*}
    \varphi(v) &= v\\
    \varphi(u) &= u -\half |t|^2 v + t\\
    \varphi(x) &= a(x) - \left<t,a(x)\right> v~,
  \end{align*}
  for all $x\in\fg$ and where $t\in\fg$ and $a\in\Aut^0\fg$.
\end{proposition}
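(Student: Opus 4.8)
The plan is to obtain $\Aut^0 V$ simply by specialising the description of the full automorphism group established in the preceding Proposition. That result already guarantees that any $\varphi\in\Aut V$ has the form $\varphi(v)=\beta^{-3}v$, $\varphi(u)=\beta u+\gamma v+t$ and $\varphi(x)=\beta^{-1}a(x)-\beta^{-2}\langle t,a(x)\rangle v$, with $\beta\in\RR^\times$, $\gamma\in\RR$, $t\in\fg$ and $a\in\Aut^0\fg$; in particular the ``rotation part'' $a$ is already forced to be an isometric automorphism of $\fg$. Hence nothing new needs to be derived about $a$, and the task reduces to imposing the single additional requirement that $\varphi$ preserve the inner product $b$, which will pin down the two scalars $\beta$ and $\gamma$.

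First I would evaluate $\langle\varphi(w),\varphi(w')\rangle$ on all pairs drawn from the Witt basis $\{u,v\}$ together with vectors of $\fg$, using $\langle u,u\rangle=\langle v,v\rangle=0$, $\langle u,v\rangle=1$ and $u,v\perp\fg$. Most pairings are automatic: $\langle\varphi(v),\varphi(v)\rangle=0$ and $\langle\varphi(v),\varphi(x)\rangle=0$ hold trivially, while $\langle\varphi(u),\varphi(x)\rangle$ vanishes because the contribution of the $\beta u$ term against the $v$-part of $\varphi(x)$ exactly cancels the contribution of $t$ against the $\fg$-part. The content therefore sits in just three pairings. From $\langle\varphi(x),\varphi(y)\rangle=\beta^{-2}\langle a(x),a(y)\rangle=\beta^{-2}\langle x,y\rangle$ (using that $a$ is an isometry) the isometry condition forces $\beta^{-2}=1$; the same value reappears in $\langle\varphi(u),\varphi(v)\rangle=\beta^{-2}$, which must equal $1$, confirming $\beta^{2}=1$. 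Finally $\langle\varphi(u),\varphi(u)\rangle=|t|^2+2\beta\gamma$ must vanish, giving $\gamma=-\half\,\beta^{-1}|t|^2$.

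The only genuinely non-routine point is the sign, since the above pins $\beta$ down only up to $\beta^2=1$. Taking $\beta=1$ yields $\varphi(v)=v$, $\gamma=-\half|t|^2$ and $\varphi(x)=a(x)-\langle t,a(x)\rangle v$, which is exactly the asserted normal form; the choice $\beta=-1$ merely precomposes this with the central involution $u\mapsto-u$, $v\mapsto-v$, $x\mapsto-x$ (the automorphism with $\beta=-1$, $a=\mathrm{id}$, $t=0$), so that $\beta=1$ is the identity-component normalisation recorded in the statement. To finish I would verify the converse by direct substitution: for any $t\in\fg$ and $a\in\Aut^0\fg$ the displayed $\varphi$ manifestly preserves $u,v\perp\fg$, the two null conditions and $\langle u,v\rangle=1$, and it respects the nonzero $3$-brackets $[u,x,y]=[x,y]$ and $[x,y,z]=-\langle[x,y],z\rangle v$ because $a$ is a Lie algebra automorphism and an isometry, so that $\langle[a(x),a(y)],a(z)\rangle=\langle[x,y],z\rangle$ and the $v$-components match. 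The main obstacle is thus purely organisational, namely keeping track of which of the six basis pairings are automatic and which produce the two scalar constraints, together with the sign bookkeeping above, rather than anything conceptual.
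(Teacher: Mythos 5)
Your proposal is correct and takes essentially the same route as the paper: both specialise the general form of $\varphi\in\Aut V$ from the preceding proposition and then impose isometry, with $\left<\varphi(u),\varphi(v)\right>=1$ constraining $\beta$ and $\left<\varphi(u),\varphi(u)\right>=0$ fixing $\gamma=-\half|t|^2$. You are in fact slightly more careful than the paper, whose one-line proof says the first condition ``fixes $\beta=1$'' even though it only gives $\beta^{-2}=1$; you correctly note the residual $\beta=-1$ solution (composition with $-\mathrm{id}$, which is an isometric automorphism of any Lie 3-algebra) and set it aside as a separate component, which is the natural reading of the paper's intended normalisation.
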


\begin{proof}
  The condition $\left<\varphi(u),\varphi(v)\right> = 1$ fixes $\beta
  =1$.  The condition $\left<\varphi(u),\varphi(u)\right>=0$ fixes
  $\gamma = - \half |t|^2$.  The rest of the conditions are satisfied
  identically.
\end{proof}

The automorphism generated by $\beta$ (that does not preserve the
inner product) can be identified with the transformation used in
\cite{BRGTV,HIM-M2toD2rev} to fix the value of the coupling
constant in the Bagger--Lambert theory.

As we will now show, the connected component of $\Aut^0 V$ consists of
the inner automorphisms obtained by exponentiating the inner derivations
of the Lie 3-algebra $V$.  Indeed, the inner derivations are thus
given by $\ad_{u,x} = \ad_x$, $\ad_{x,y} = [x,y]\otimes v^\flat - v
\otimes [x,y]^\flat$ for all $x,y\in\fg$, with $\flat : V \to V^*$
denoting the musical isomorphism induced by the inner product; that is,
$x^\flat(y) = \left<x,y\right>$.  Since $\ad_{x,y}$ only depends on
$x,y$ via their Lie bracket, and since $\fg = [\fg,\fg]$, we see that
the image of the $\ad_{x,y}$ is the abelian subalgebra of $\fgl(V)$
given by
\begin{equation*}
  \fg_{\text{ab}} := \left\{ t \otimes v^\flat - v \otimes t^\flat
    \middle | t\in\fg   \right\}~.
\end{equation*}
Similarly, the image of the $\ad_{u,x}$ is the adjoint Lie algebra
$\ad\fg$ of $\fg$, and it is clear that $\ad\fg$ acts on
$\fg_{\text{ab}}$ by restricting the defining representation of
$\fgl(\fg)$.  In other words, the inner derivations of $V$ span a Lie
algebra
\begin{equation*}
  \ad V \cong \fg_{\text{ab}} \rtimes \ad\fg~.
\end{equation*}
As a subalgebra of $\fso(V)$, this is contained in the stabiliser of the
null vector $v$, with $\fg_{\text{ab}}$ acting as null rotations and
$\ad\fg$ as transverse rotations.  Exponentiating $\ad V$, we obtain the
group $\Ad V = \fg_{\text{ab}} \rtimes \Ad G$, which is an affinisation
of the adjoint group.  Indeed, exponentiating $\ad \fg$ we obtain $\Ad
G$, whereas exponentiating an element of the form $t \otimes v^\flat - v
\otimes t^\flat$, we obtain
\begin{equation*}
  1 + t \otimes v^\flat - v \otimes t^\flat - \half |t|^2 v \otimes
  v^\flat~.
\end{equation*}
In summary, elements $\psi \in \Ad V$ are parametrised by $a\in\Ad G$
and $t\in \fg$ and act by
\begin{equation}\label{eq:AdVonV}
  \begin{aligned}[m]
    \psi(v) &= v\\
    \psi(u) &= u + t - \half |t|^2 v\\
    \psi(x) &= a(x) - \left<t,a(x)\right> v~,
  \end{aligned}
\end{equation}
whence $\Ad V$ is precisely the connected component of the identity of
$\Aut^0 V$, as claimed.

The Lie algebra $\Der V$ of $\Aut V$ consists of derivations of $V$.
It is isomorphic to the real Lie algebra with generators $D$, $S$, $L_x$
and $T_x$ for $x\in\fg$, subject to the following nonzero Lie brackets:
\begin{equation*}
  [D,S] = -4 S~, \qquad
  [D,T_x] = - 2 T_x~, \qquad
  [L_x,L_y] = L_{[x,y]} \qquad\text{and}\qquad
  [L_x,T_y] = T_{[x,y]}~.
\end{equation*}
If we let $\fa$ denote the two-dimensional solvable Lie subalgebra
spanned by $D$ and $S$, then we find that $\Der V$ has the following
structure
\begin{equation*}
  \Der V \cong \fa \ltimes \ad V~.
\end{equation*}

\subsection{Maximal abelian subalgebras}
\label{sec:maximal-abelian}

We now determine the maximal abelian subalgebras of $V$.  Every maximal
abelian subalgebra contains the centre $Z = \RR v$.  Let $A < V$ be a
maximal abelian subalgebra.  Then the restriction of the inner product
to $A$ is either degenerate or nondegenerate.  If degenerate, it means
that there can be no element of $A$ of the form $u + \cdots$, whereas if
it is nondegenerate, there is such an element, which can be taken to
have the form $u + z$, for some $z \in \fg$.

In the nondegenerate case, $A = \RR(u+z) \oplus \RR v \oplus B$, where
$B \subset \fg$ is a subspace obeying $[x,y] = 0$ for all $x,y\in B$.
In other words, $B$ is an abelian Lie subalgebra of $\fg$.  Since $\fg$
is compact, maximal abelian Lie subalgebras coincide with the Cartan
subalgebras.  Hence nondegenerate maximal abelian subalgebras of $V$ are
of the form $A = \RR (u+z) \oplus \RR v \oplus \fh$, for some $z \in
\fg$ and some Cartan subalgebra $\fh < \fg$.

The degenerate case is slightly more involved.  Here $A = \RR v \oplus
\fp$, where $\fp \subset \fg$ is a subspace of $\fg$ on which the
three-form $\Omega(x,y,z) = \left<[x,y],z\right>$ vanishes identically.
We call such subspaces \textbf{$\Omega$-isotropic}.  An equivalent
condition for a subspace $\fp$ to be $\Omega$-isotropic is that $[\fp,\fp]
\subset \fp^\perp$, whence the $\Omega$-isotropic Lie subalgebras are
necessarily abelian.  The maximal $\Omega$-isotropic subalgebras are
therefore the Cartan subalgebras.  However there is no need for $\fp$ to
be a Lie subalgebra: it is $A$ which has to be an abelian (3-)subalgebra
of $V$ and this only requires $\fp$ to be a subspace.  We say that an
$\Omega$-isotropic subspace is \textbf{maximal}, if it is not properly
contained in any $\Omega$-isotropic subspace.  The following is a useful
characterisation of maximality.

\begin{lemma}\label{le:maximality}
  An $\Omega$-isotropic subspace $\fp \subset \fg$ is maximal if and only if
  $[\fp, \fp] = \fp^\perp$.
\end{lemma}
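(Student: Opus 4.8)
The plan is to reduce maximality to a one-vector extension criterion and then read off the stated condition by taking orthogonal complements. Throughout I use that $\Omega$-isotropy of $\fp$ already forces $[\fp,\fp]\subset\fp^\perp$, as noted just before the lemma, so the content lies in the reverse inclusion.

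First I would record the elementary reduction: maximality of $\fp$ among $\Omega$-isotropic subspaces is equivalent to the impossibility of enlarging $\fp$ by a single vector. Indeed, every subspace of an $\Omega$-isotropic subspace is again $\Omega$-isotropic, so if $\fq\supsetneq\fp$ is $\Omega$-isotropic and $w\in\fq\setminus\fp$, then $\fp+\RR w\subset\fq$ is $\Omega$-isotropic and properly contains $\fp$. Hence $\fp$ is maximal precisely when no such single-vector enlargement exists. The key computation is then that, for $w\notin\fp$, the subspace $\fp+\RR w$ is $\Omega$-isotropic if and only if $w\in[\fp,\fp]^\perp$. Since $\Omega$ is alternating and already vanishes on triples from $\fp$, the only new constraints arise from triples containing exactly one copy of $w$ (those containing two copies of $w$ vanish automatically); these read $\Omega(x,y,w)=\left<[x,y],w\right>=0$ for all $x,y\in\fp$, which is exactly the condition $w\perp[\fp,\fp]$.

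Combining these, $\fp$ fails to be maximal precisely when there exists $w\in[\fp,\fp]^\perp$ with $w\notin\fp$, i.e. when $[\fp,\fp]^\perp\supsetneq\fp$. Because $[\fp,\fp]\subset\fp^\perp$ we always have $\fp=(\fp^\perp)^\perp\subset[\fp,\fp]^\perp$, so this proper inclusion is equivalent to $[\fp,\fp]^\perp\neq\fp$. Taking orthogonal complements and using $(W^\perp)^\perp=W$, valid since the inner product on $\fg$ is nondegenerate (in fact positive-definite), one has $[\fp,\fp]^\perp=\fp$ if and only if $[\fp,\fp]=\fp^\perp$. Therefore $\fp$ is maximal if and only if $[\fp,\fp]=\fp^\perp$, which is the assertion.

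I do not expect a serious obstacle: once the one-vector reduction is in place the argument is pure linear algebra built on the alternating property of $\Omega$ and the nondegeneracy of the inner product. The only point needing (mild) care is that reduction, namely that non-maximality always produces a codimension-one enlargement lying inside a given larger $\Omega$-isotropic subspace, and the bookkeeping that the sole new isotropy condition is $w\perp[\fp,\fp]$.
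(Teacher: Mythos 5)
Your proposal is correct and follows essentially the same route as the paper: characterise failure of maximality by the existence of a one-vector enlargement, observe that $\fp\oplus\RR w$ is $\Omega$-isotropic precisely when $w\perp[\fp,\fp]$, and conclude by taking orthogonal complements. Your version merely spells out a few steps the paper leaves implicit (the reduction to codimension-one enlargements and the explicit double-perp step), but the underlying argument is identical.
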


\begin{proof}
  Let $\fp\subset \fg$ be an $\Omega$-isotropic subspace properly contained
  in another $\Omega$-isotropic subspace.  Then there is some $x \in
  \fp^\perp$ such that $\hat\fp = \fp \oplus \RR x$ is $\Omega$-isotropic.
  This condition is equivalent to $x \in [\fp,\fp]^\perp$ or dually that
  $[\fp,\fp] \subset x^\perp$.  In other words, $[\fp,\fp] \subsetneq
  \fp^\perp$.
\end{proof}

A large class of maximally $\Omega$-isotropic subspaces are in
one-to-one correspondence with the compact riemannian symmetric spaces.
Indeed, let $\fk \subset \fg$ be a Lie subalgebra and consider $\fp =
\fk^\perp$.  Since $\fk$ preserves the inner product, $\fp$ is stable
under the adjoint action of $\fk$; that is, $[\fk,\fp] \subset \fp$,
whence the split $\fg = \fk \oplus \fp$ is reductive.  The split will be
symmetric, so that $[\fp,\fp] \subset \fk$ precisely when $\fp$ is
$\Omega$-isotropic.  Indeed, $\Omega$ is essentially the torsion of the
canonical connection on $G/K$ and precisely when $G/K$ is a symmetric
space, this connection agrees with the Levi-Cività connection, which is
torsionless.

\begin{table}[h!]
  \centering
  \begin{tabular}{|c|>{$}c<{$}|>{$}c<{$}|>{$}c<{$}|>{$}c<{$}|>{$}c<{$}|}
    \hline
    Type & \fg & \fk & \dim\fg & \dim\fp & \text{rank}\\
    \hline\hline
    \textsf{A}I & \fsu(n) & \fso(n) & n^2-1 & \half (n-1)(n+2) & n-1\\
    \textsf{A}II & \fsu(2n) & \fsp(n) & 4n^2 - 1 & (n-1)(2n+1) & n-1\\
    \textsf{A}III & \fsu(p+q) & \fsu(p)\oplus\fsu(q) \oplus \RR & (p+q)^2-1 & 2pq & \min(p,q)\\
    \textsf{BD}I & \fso(p+q) & \fso(p) \oplus \fso(q) & \half (p+q)(p+q-1) & pq  & \min(p, q)\\
    \textsf{D}III & \fso(2n) & \fu(n) & n(2n-1) & n(n-1) & \lfloor n/2 \rfloor \\
    \textsf{C}I & \fsp(n) & \fu(n) & n(2n+1) & n(n+1)  & n \\
    \textsf{C}II & \fsp(p+q) & \fsp(p) \oplus \fsp(q) & (p+q)(2p+2q+1) & 4pq & \min(p,q)\\
    \textsf{E}I & \fe_6 & \fsp(4) & 78 & 42 & 6\\
    \textsf{E}II & \fe_6 & \fsu(6) \oplus \fsu(2) & 78 & 40 & 4\\
    \textsf{E}III & \fe_6 & \fso(10) \oplus \fso(2) & 78 & 32 & 2\\
    \textsf{E}IV & \fe_6 & \ff_4 & 78 & 26 & 2\\
    \textsf{E}V & \fe_7 & \fsu(8) & 133 & 70 & 7\\
    \textsf{E}VI & \fe_7 & \fso(12)\oplus \fsu(2) & 133 & 64 & 4\\
    \textsf{E}VII & \fe_7 & \fe_6 \oplus \fso(2) & 133 & 54 & 3\\
    \textsf{E}VIII & \fe_8 & \fso(16) & 248 & 128 & 8\\
    \textsf{E}IX & \fe_8 & \fe_7\oplus \fsu(2) & 248 & 112 & 4\\
    \textsf{F}I & \ff_4 & \fsp(3) \oplus \fsu(2) & 52 & 28 & 4\\
    \textsf{F}II & \ff_4 & \fso(9) & 52 & 16 & 1\\
    \textsf{G} & \fg_2 & \fso(4) & 14 & 8 & 2\\\hline
  \end{tabular}
  \vspace{8pt}
  \caption{Symmetric splits $\fg = \fk \oplus \fp$ of compact simple Lie algebras}
  \label{tab:RSS}
\end{table}

There are two types of compact irreducible riemannian symmetric spaces:
\begin{itemize}
\item \textbf{Type I}: $G/K$ where $G$ is a compact simple Lie group and
  $K$ a subgroup with Lie algebra $\fk < \fg$ such that orthogonal
  decomposition $\fg = \fk \oplus \fp$ is a symmetric split; and
\item \textbf{Type II}: compact simple Lie groups $H$ relative to a
  bi-invariant metric.  This can be written in terms of a symmetric
  split, with $\fg = \fh \oplus \fh$, with the ad-invariant inner
  product on both copies of $\fh$ being the same, and $\fk =
  \left\{(x,x) \middle | x \in \fh \right\}$ the diagonal Lie
  subalgebra.  Its perpendicular complement is $\fp = \left\{(x,-x)
    \middle | x \in \fh \right\}$.  It is easy to see that
  $[\fp,\fp]=\fk$, whence by Lemma~\ref{le:maximality}, $\fp$ is
  maximal.
\end{itemize}

Table~\ref{tab:RSS} lists the type I irreducible riemannian symmetric
spaces in terms of their symmetric splits $\fg = \fk\oplus \fp$ using
the Cartan nomenclature as described in \cite{Helgason}.  There are some
repetitions in the table, which can be eliminated by taking $p\geq q$ in
\textsf{A}III, \textsf{BD}I and \textsf{C}II; $n\geq 2$ in \textsf{A}I,
\textsf{A}II and \textsf{C}I; $n\geq 5$ in \textsf{D}III; $p\geq 2$ in
\textsf{A}III; and taking $p+q\geq 7$ in \textsf{BD}I in addition to
$p=q=1$.

\begin{proposition}
  The $\Omega$-isotropic subspaces $\fp=\fk^\perp$ in Table~\ref{tab:RSS} are
  maximal.
\end{proposition}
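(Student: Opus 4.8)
The plan is to reduce the claim to the single algebraic identity $[\fp,\fp]=\fk$ and then establish it using nothing more than the simplicity of $\fg$. Since $\fp = \fk^\perp$, nondegeneracy of the inner product gives $\fp^\perp = \fk$, so by Lemma~\ref{le:maximality} maximality of $\fp$ is \emph{equivalent} to $[\fp,\fp]=\fk$. The table lists the Type~I pairs, for which $\fg$ is compact \emph{simple} (the Type~II case having already been disposed of in the text). One inclusion is immediate: because $\fg = \fk\oplus\fp$ is a symmetric split, $[\fp,\fp]\subset\fk$ — this is precisely the $\Omega$-isotropy of $\fp$. So the entire content is the reverse inclusion $\fk \subset [\fp,\fp]$.

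First I would record that $\fk' := [\fp,\fp]$ is an \emph{ideal} of $\fk$: using $[\fk,\fp]\subset\fp$ together with the Jacobi identity, $[\fk,[\fp,\fp]]\subset[[\fk,\fp],\fp]\subset[\fp,\fp]$. Next, let $\mathfrak{m}$ be the orthogonal complement of $\fk'$ inside $\fk$ with respect to the (positive-definite, invariant) inner product. Since the inner product is $\fk$-invariant, the orthogonal complement of an ideal is again an ideal, so $\mathfrak{m}\lhd\fk$, and moreover $[\fk',\mathfrak{m}]\subset\fk'\cap\mathfrak{m}=0$.

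The key step is to show that $\mathfrak{m}$ acts trivially on $\fp$. For $m\in\mathfrak{m}$ and $p,p'\in\fp$, invariance of the inner product gives $\left<[p,m],p'\right> = -\left<m,[p,p']\right>$; but $[p,p']\in\fk'$ while $m\perp\fk'$, so the right-hand side vanishes. As $[p,m]\in\fp$ and the inner product restricted to $\fp$ is nondegenerate, this forces $[m,p]=0$ for all $p\in\fp$, i.e.\ $[\mathfrak{m},\fp]=0$. Combined with $[\mathfrak{m},\fk]\subset\mathfrak{m}$, this yields $[\mathfrak{m},\fg]\subset\mathfrak{m}$, so $\mathfrak{m}$ is an ideal of the whole of $\fg$. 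Since $\fg$ is simple and $\mathfrak{m}\subset\fk\subsetneq\fg$ (every entry of the table has $\dim\fp>0$), we conclude $\mathfrak{m}=0$, hence $[\fp,\fp]=\fk=\fp^\perp$, which is exactly the maximality criterion of Lemma~\ref{le:maximality}.

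I expect the only real subtlety to be the invariance computation showing $[\mathfrak{m},\fp]=0$; everything else is bookkeeping with ideals. The one external input I rely on is that each listed pair is a genuine irreducible symmetric pair with $\fg$ simple (taken from \cite{Helgason}); this is what lets me replace the usual appeal to effectiveness of the isotropy representation by the cleaner statement that a proper ideal of a simple $\fg$ must vanish.
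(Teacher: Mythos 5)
Your proof is correct, and it is a uniformised version of the paper's argument rather than a different one in substance. The paper proceeds by a case split on the structure of $\fk$: when $\fk$ is simple it simply notes that the nonzero ideal $[\fp,\fp]\lhd\fk$ must be all of $\fk$; when $\fk=\fk_1\oplus\fk_2$ it assumes $[\fp,\fp]=\fk_1$ and runs exactly your invariance computation $\left<[\fk_2,\fp],\fp\right> = -\left<\fk_2,[\fp,\fp]\right>=0$ to conclude $[\fk_2,\fp]=0$ and hence that $\fk_2$ would be an ideal of the simple $\fg$; and for \textsf{A}III (where $\fk$ has three summands) it remarks that ``one shows that $[\fp,\fp]^\perp\cap\fk$ is an ideal of $\fg$'' --- which is precisely your $\mathfrak{m}$. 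So your single argument is the one the paper deploys in its last case, applied across the board; the key step (skew-adjointness of $\ad_p$ forcing $[\mathfrak{m},\fp]=0$, then $\mathfrak{m}\lhd\fg$) is identical. What your packaging buys is the elimination of the case analysis, independence from any knowledge of the ideal structure of $\fk$, and as a byproduct a proof that $[\fp,\fp]\neq 0$ (which the paper asserts without comment in the simple-$\fk$ case: if $[\fp,\fp]=0$ your argument makes $\fk$ itself an ideal of $\fg$). Two cosmetic points: the inclusion $[\fk,[\fp,\fp]]\subset[[\fk,\fp],\fp]$ should really read $[\fk,[\fp,\fp]]\subset[[\fk,\fp],\fp]+[\fp,[\fk,\fp]]$, though both terms lie in $[\fp,\fp]$ so nothing is lost; and you correctly note that positive-definiteness is what lets you pass from $\left<[p,m],\fp\right>=0$ to $[p,m]=0$, the one place where the euclidean signature is genuinely used.
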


\begin{proof}
  We observe that the Jacobi identity says that $[\fp,\fp]$ is an ideal
  of $\fk$.  Now many of the $\fk$ in Table~\ref{tab:RSS} are simple,
  whence $[\fp,\fp]$, being nonzero, must be all of $\fk$ and by
  Lemma~\ref{le:maximality}, $\fp$ is maximal.  For the remaining
  entries but one, $\fk = \fk_1 \oplus \fk_2$, with $\fk_i$ a simple or
  one-dimensional ideal.  Then $\fp$, if not maximal, must satisfy
  $[\fp,\fp] = \fk_i$ for some $i=1,2$.  Let's assume, without loss of
  generality, that $[\fp,\fp]=\fk_1$.  Then
  $\left<[\fp,\fp],\fk_2\right>=0$, but this means that
  $\left<[\fk_2,\fp],\fp\right>=0$, whence $[\fk_2,\fp]=0$, since it
  belongs to $\fp$.  Since $[\fk_1,\fk_2]=0$, $\fk_2$ would be an ideal
  of $\fg$, contradicting the fact that $\fg$ is simple.  Therefore
  $[\fp,\fp]=\fk$ and again by Lemma \ref{le:maximality} it is maximal.
  Finally, in case \textsf{A}III, $\fk = \fsu(p) \oplus \fsu(q) \oplus
  \RR$.  In the same as when $\fk$ is the sum of two ideals, one shows
  that $[\fp,\fp]^\perp \cap \fk$ is an ideal of $\fg$.  Since it cannot
  be all of $\fg$, simplicity of $\fg$ says that it must be zero.
\end{proof}

We have not been able to construct any maximal $\Omega$-isotropic
subspace $\fp \subset \fg$ which does not come from a symmetric split,
but neither have we been able to prove that they all arise in this way;
although it would be tempting to conjecture that this is the case.

\subsection{The moduli spaces}
\label{sec:moduli-spaces}

We now quotient by inner automorphisms to arrive at the moduli spaces.

The classical space of vacua has two main branches, corresponding to
degenerate and nondegenerate maximal abelian subalgebras $A<V$.  The
degenerate branch splits further into sub-branches labelled by the
different types of maximal $\Omega$-isotropic subspaces, many of which
are given by (not necessarily irreducible) compact symmetric spaces.

\subsubsection{Moduli of nondegenerate maximal abelian subalgebras}
\label{sec:nondegmod}

Let us first consider the branch of the moduli space corresponding to
nondegenerate maximal abelian subalgebras of $V$ of the form $A(\fh,z)
= \RR(u + z) \oplus \RR v \oplus \fh$ for some $z\in\fg$ and some
Cartan subalgebra $\fh < \fg$.  Such maximal abelian subalgebras are
parametrised by $\Ca(\fg) \times \fg$, where $\Ca(\fg)$ is the space of
Cartan subalgebras of $\fg$, which we can think of as a submanifold of
the grassmannian of $\rank \fg$-planes in $\fg$, isometric to $G/H$,
where $H$ is the maximal torus of a fixed Cartan subalgebra.

The subgroup $\Ad V < \GL(V)$ acts on $V$ and hence on the $A(\fh,z)$.
Let $\psi = \psi(t,a) \in \Ad V$ with $t \in \fg$ and $a \in \Ad G$.
Then from \eqref{eq:AdVonV} we see that, for all $x\in\fg$,
\begin{align*}
  \psi(v) &= v\\
  \psi(u+z) &= u + a z + t - \left(\left<t,a z\right> + \half
    |t|^2\right) v\\
  \psi(x) &= a x - \left<t,a x\right>v~.
\end{align*}
In other words, $\psi$ maps the subspace $A(\fh,z)$ to $A(a\fh, a z +
t)$.  Now fix a Cartan subalgebra $\fh_0 \subset \fg$ and let $A_0 =
A(\fh_0,0)$.  Any other Cartan subalgebra of $\fg$ can be obtained from
$\fh_0$ by acting with some $a \in \Ad G$.  The translational component
of $\Ad V$ allows us to shift $z$ in $(\fh,z)$ to $0$.  In other words,
given any $A(\fh,z)$, there is some $\psi\in\Ad V$ such that $A(\fh,z) =
\psi A_0$.

The subset of classical vacua corresponding to such maximal subalgebras
is given by
\begin{align*}
  \eV_{\text{nondeg}} &= \bigcup_{(\fh,z)\in\Ca(\fg)\times\fg} \Hom(\RR^8,A(\fh,z))\\
  &= \bigcup_{\psi\in\Ad V} \Hom(\RR^8,\psi A_0)\\
  &= \bigcup_{\psi\in\Ad V} \psi \Hom(\RR^8,A_0)~.
\end{align*}
In other words, it is the orbit of $\Hom(\RR^8,A_0)$ under $\Ad V$.
Quotienting by $\Ad V$ yields
\begin{equation*}
  \eM_{\text{nondeg}} = \Hom(\RR^8,A_0)/G_0
\end{equation*}
where $G_0$ is the stabiliser of $A_0$ (and hence of $\Hom(\RR^8,A_0)$)
in $\Ad V$.  This quotient is not trivial because there are elements of
$\Ad V$ which preserve $A_0$ as a subspace, but not $A_0$ pointwise.
Indeed, the stabiliser of $A_0$ in $\Ad V$ is the same as the stabilizer
of $(\fh_0,0) \in \Ca(\fg)\times \fg$.  The translations move the
origin, hence $G_0$ is a subgroup of $\Ad G$.  In fact, it is $\Ad
N(H_0)$, where $N(H_0)$ is the normaliser (in $G$) of the maximal torus
$H_0$ corresponding to $\fh_0$.  Clearly $H_0 < N(H_0)$ fixes every
point of $\fh_0$, whence only the Weyl group $\fW_0 := N(H_0)/H_0$ of
$\fh_0$ acts effectively.

In summary, the branch of the moduli space of classical vacua
corresponding to nondegenerate maximal abelian subalgebras of $V$ is
given by
\begin{equation*}
  \eM_{\text{nondeg}} = \Hom(\RR^8,\RR v \oplus \RR u \oplus \fh_0)/\fW_0~,
\end{equation*}
where $\fW_0$ is the Weyl group of $\fh_0$.  We see that
$\dim\eM_{\text{nondeg}} = 8(2 + \rank\fg)$.  In the next section we
will study the asymptotics of this branch for large rank as $\fg$ varies
among the compact semisimple Lie algebras.

\subsubsection{Moduli of degenerate maximal abelian subalgebras}
\label{sec:degmod}

We now consider the branch of the moduli space corresponding to
degenerate maximal abelian subalgebras of $V$ of the form $A(\fp) = \RR
v \oplus \fp$, where $\fp \subset \fg$ is a subspace obeying
$[\fp,\fp]=\fp^\perp$.  In other words, such subalgebras are
parametrised by the space $\Iso_\Omega(\fg)$ of maximal
$\Omega$-isotropic subspaces of $\fg$.  The adjoint group $\Ad G$
preserves $\Iso_\Omega(\fg)$ and decomposes it into orbits.  Some of
these orbits are in one-to-one correspondence with (isometry classes of)
compact riemannian symmetric spaces.  The group $\Ad V$ acts on the
$A(\fp)$ as follows.  Let $\psi=\psi(a,t) \in \Ad V$ with $a \in \Ad G$
and $t \in \fg$.  Then from \eqref{eq:AdVonV} we see that $\psi(v) = v$
and, for $x\in\fg$, and $\psi(x) = a x - \left<t,ax\right>v$, whence
$\psi A(\fp) = A(a\fp)$.

The subset of classical vacua corresponding to the $A(\fp)$ is given by
\begin{align*}
  \eV_{\text{deg}} &= \bigcup_{\fp \in \Iso_\Omega(\fg)}
  \Hom(\RR^8,A(\fp))\\
  &= \bigcup_{[\fp_0]\in\Iso_\Omega(\fg)/\Ad G} \quad \bigcup_{\psi \in
    \Ad V} \Hom(\RR^8,\psi A(\fp_0))~,
\end{align*}
where $\fp_0$ stands for a representative subspace in the orbit
$[\fp_0]$ of $\Ad G$ on $\Iso_\Omega(\fg)$.  A subset of orbits consists
of isometry classes of compact riemannian symmetric spaces of the form
$G/K$ with $\fp_0= \fk^\perp$.

Each orbit $[\fp_0]$ gives rise to a branch of the moduli space obtained
by quotienting the $\Ad V$ orbit of $\Hom(\RR^8,A(\fp_0)$ by $\Ad V$.
As before, the result is
\begin{equation*}
  \eM_{[\fp_0]}  = \Hom(\RR^8,A(\fp_0)/(\fg_{\text{ab}} \rtimes \Ad K_0)~,
\end{equation*}
where $\Ad K_0 < \Ad G$ is the stabiliser of $\fp_0$.

Every (isometry class of) compact riemannian symmetric space gives rise
to one such $\eM_{[\fp_0]}$ with $\dim\fp_0$ being the dimension of the
symmetric space.  Those symmetric spaces which are products of
irreducibles of type \textsf{A}I, \textsf{BD}I (with $p=q$),
\textsf{C}I, \textsf{E}I, \textsf{E}V, \textsf{E}VIII, \textsf{F}I and
\textsf{G} have maximal dimension for a given $\fg$: their dimension
being $\half(\dim\fg + \rank\fg)$.  They are characterised by the
property that $\fp$ contains a Cartan subalgebra of $\fg$, or
equivalently, that they are of maximal rank.  At the other extreme,
symmetric spaces which are products of spheres, complex and quaternionic
projective spaces (i.e., types \textsf{A}III, \textsf{BD}I and
\textsf{C}II, all with $q=1$) have smallest possible dimension compared
with the dimension of $\fg$.  Whereas in the former class the dimension
of the moduli space of vacua grows like the square of the rank of $\fg$,
in the latter the dimension of the moduli space grows linearly.

It would be very interesting to find a natural interpretation of these
maximal $\Omega$-isotropic subspaces in M-theory.  Whereas the
non-degenerate branch seems to capture the expected vacua of the super
Yang-Mills description on a stack of D2-branes (perhaps with an
orientifold plane) lifted to M-theory, the configuration space of
M-branes that might mimic the degenerate branch is less clear to us.

\subsection{Asymptotic behaviour}
\label{sec:asymptotia}

It is expected \cite{KlebanovTseytlinEntropy,KlebanovAbsorption} that
for a theory of $N$ coincident M2-branes, the number of physical
degrees of freedom should grow as $N^{3/2}$ for large $N$.  In the
present context, the number $N$ is the dimension of the moduli space
of vacua, whereas the number of degrees of freedom is the dimension of
the Lie 3-algebra $V$.  For the nondegenerate branch we have seen that
whereas $\dim V = \dim \fg + 2$, the dimension of the moduli space of
vacua is $8(\rank \fg + 2)$.  It is therefore natural to ask how
$(\dim\fg,\rank\fg)$ are distributed for semisimple Lie algebras.

\begin{table}[h!]
  \centering
  \begin{tabular}{|>{$}c<{$}|>{$}c<{$}|>{$}c<{$}|>{$}c<{$}|>{$}c<{$}|}
    \hline
    \text{Lie algebra} & \text{Rank} & \text{Dimension} & \text{Weyl
      group $\fW$} & \text{order of $\fW$}\\
    \hline\hline
    A_n & n & n(n+2) & S_{n+1}  & (n+1)!\\
    B_n & n & n(2n+1) & (\ZZ_2)^n \rtimes S_n & 2^n n!\\
    C_n & n & n(2n+1) &  (\ZZ_2)^n \rtimes S_n & 2^n n!\\
    D_n & n & n(2n-1) &  (\ZZ_2)^{n-1} \rtimes S_n & 2^{n-1} n!\\
    E_6 & 6 & 78 &  & 2^7 3^4 5\\
    E_7 & 7 & 133 & & 2^{10} 3^4 5~7\\
    E_8 & 8 & 248 &  & 2^{14} 3^5 5^2 7\\
    F_4 & 4 & 52 & S_3 \ltimes (S_4 \ltimes (\ZZ_2)^3)& 2^7 3^2\\
    G_2 & 2 & 14 & D_6 & 2^2 3\\
    \hline
  \end{tabular}
  \vspace{8pt}
  \caption{The simple Lie algebras}
  \label{tab:simple}
\end{table}

Table~\ref{tab:simple} lists the ranks and dimensions of the simple Lie
algebras, as well as information on the Weyl group which may help to
interpret the classical vacua in terms of configurations of M2 branes.
This data allows us to write the following partition function for
compact semisimple Lie algebras:
\begin{multline*}
  Z_{\text{SSLA}}(t,\omega) = \prod_{n\geq 1} \frac{1}{1-\omega^n t^{n(n+2)}}
  \prod_{n\geq 2} \frac{1}{1-\omega^n t^{n(2n+1)}}
  \prod_{n\geq 3} \frac{1}{1-\omega^n t^{n(2n+1)}}
  \prod_{n\geq 4} \frac{1}{1-\omega^n t^{n(2n-1)}}\\
  \times 
  \frac{1}{1-\omega^6 t^{78}}
  ~ 
  \frac{1}{1-\omega^7 t^{133}}
  ~ 
  \frac{1}{1-\omega^8 t^{248}}
  ~ 
  \frac{1}{1-\omega^4 t^{52}}
  ~ 
  \frac{1}{1-\omega^2 t^{14}}~.
\end{multline*}
Expanding as a power series in $t$ and $\omega$, we have
\begin{equation*}
  Z_{\text{SSLA}}(t,\omega) = 1 + \sum_{d,r\geq 1} N_{d,r} t^d \omega^r~,
\end{equation*}
where $N_{d,r}$ is the number of $d$-dimensional compact semisimple
Lie algebras of rank $r$.  A little computer experimentation shows
that the ranks are normally distributed around the mean rank, which
grows linearly with dimension.  Figure \ref{fig:averagerank} shows a
plot of the average rank as a function of dimension for the
$23,058,218,050,191,608$ compact semisimple Lie algebras of dimension
$\leq 1000$, depicted by the blue line.  The red line is the graph of $r
= d^{2/3}$.

\begin{figure}[h!]
  \centering
  \includegraphics[width=0.75\textwidth]{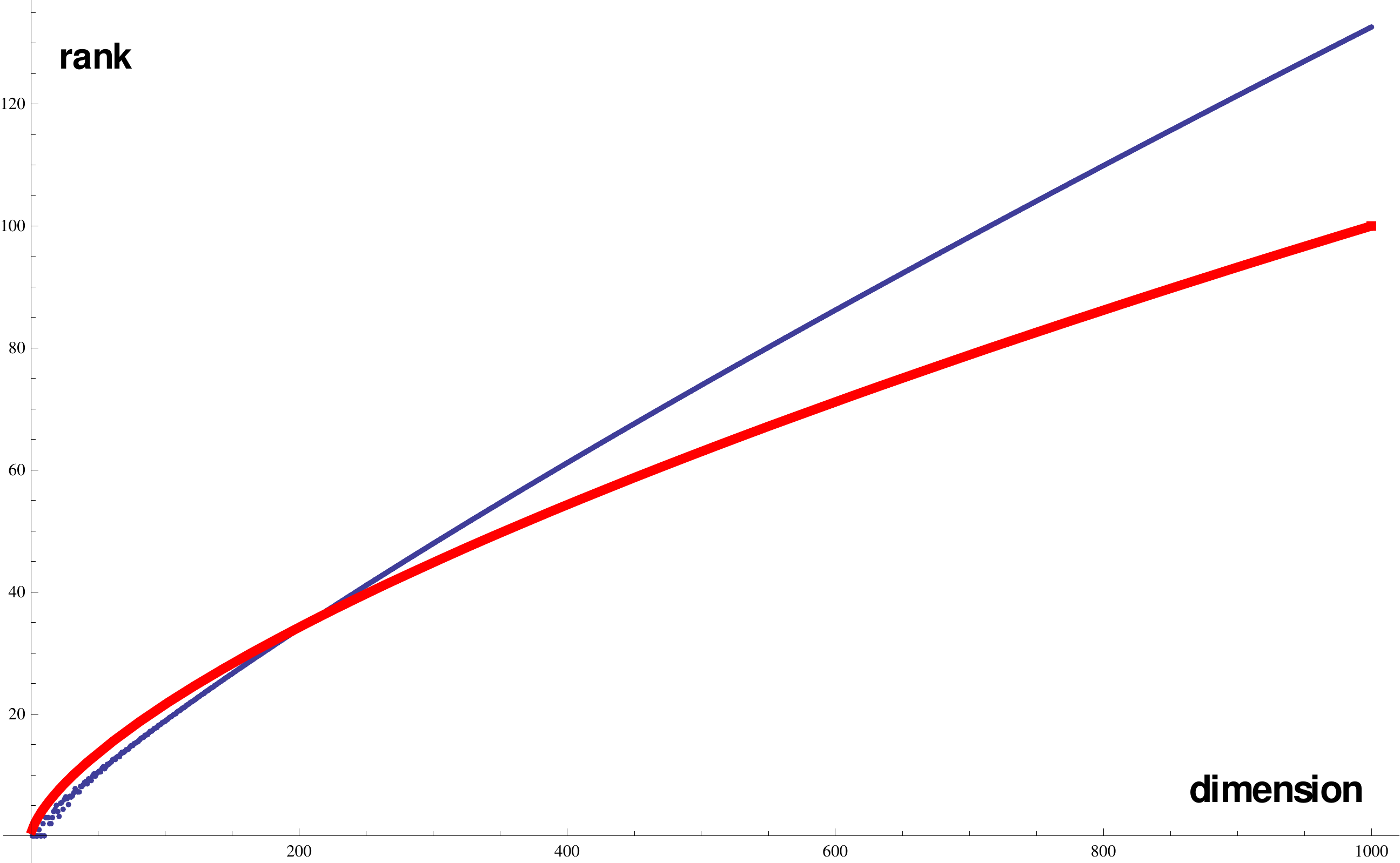}
  \vspace{8pt}
  \caption{Average rank as a function of dimension for compact semisimple Lie algebras}
  \label{fig:averagerank}
\end{figure}

Despite not being the generic behaviour, it is not difficult to come up
with series of semisimple Lie algebras whose rank and dimension obey $d
= r^{3/2}$.  Indeed, the classical simple Lie algebras of rank $n$ have
dimension which goes like $n^2$ for large $n$.  Hence taking $n$ such
algebras yields a sequence of semisimple Lie algebras where $d \sim n^3$
and $r \sim n^2$ for large $n$.  Indeed, let $\fg_n = \fs_1 \oplus \dots
\oplus \fs_n$ where the $\fs_i$ are any one of $A_n$, $B_n$, $C_n$ or
$D_n$ and let $V_n$ denote the indecomposable lorentzian Lie 3-algebra
constructed out of $\fg_n$.  The nondegenerate branch of the moduli
space of the Bagger--Lambert models associated to $V_n$ exhibit the
desired behaviour between the number of M2-branes ($\frac18 \dim\eM$)
and the number of degrees of freedom ($\dim V$).

For the degenerate branch of the moduli space, the relevant question is
how $(\dim\fg,\dim\fp)$ are distributed.  Using Table~\ref{tab:RSS} for
the Type I irreducible riemannian symmetric spaces and again
Table~\ref{tab:simple} for the ones with Type II, and being careful to
avoid repetitions due to low-dimensional isomorphisms, it is possible
to write down the following generating functional for compact riemannian
symmetric spaces
\begin{equation*}
  Z_{\text{CRSS}}(t,\omega) = Z_{\text{type I}}(t,\omega) Z_{\text{type
      II}}(t,\omega)~,
\end{equation*}
where
\begin{multline*}
  Z_{\text{type I}}(t,\omega) = \prod_{n\geq 2} \frac{1}{1 - \omega^{(n-1)(n+2)} t^{n^2-1}}
  \prod_{n\geq 2} \frac{1}{1 - \omega^{(n-1)(2n+1)} t^{4n^2-1}}
  \prod_{\substack{p\geq q \geq 1\\ p\geq 2}} \frac{1}{1 - \omega^{2pq} t^{(p+q)^2-1}}\\
  \times   \prod_{n\geq 2} \frac{1}{1 - \omega^{(n-1)(n+2)} t^{n^2-1}}
  \prod_{n\geq 2} \frac{1}{1 - \omega^{n(n-1)} t^{n(2n+1)}}
  \prod_{n\geq 5} \frac{1}{1 - \omega^{n(n-1)} t^{n(2n-1)}}\\
  \times \prod_{\substack{p\geq q \geq 1\\ p+q \geq 7}} \frac{1}{1 - \omega^{pq} t^{(p+q)(p+q-1)/2}}
  \prod_{p\geq q \geq 1} \frac{1}{1 - \omega^{4pq} t^{(p+q)(2p+2q+1)}}\\
  \times
  \frac{1}{1 - \omega t}
  ~
  \frac{1}{1 - \omega^{42} t^{78}} 
  ~
  \frac{1}{1 - \omega^{40} t^{78}} 
  ~
  \frac{1}{1 - \omega^{32} t^{78}} 
  ~
  \frac{1}{1 - \omega^{26} t^{78}} 
  ~
  \frac{1}{1 - \omega^{70} t^{133}} 
  ~
  \frac{1}{1 - \omega^{64} t^{133}} \\
  \times
  \frac{1}{1 - \omega^{54} t^{133}} 
  ~
  \frac{1}{1 - \omega^{128} t^{248}} 
  ~
  \frac{1}{1 - \omega^{112} t^{248}} 
  ~
  \frac{1}{1 - \omega^{28} t^{52}} 
  ~
  \frac{1}{1 - \omega^{16} t^{52}} 
  ~
  \frac{1}{1 - \omega^8 t^{14}}
\end{multline*}
is the partition function corresponding to Type I riemannian symmetric
spaces and
\begin{multline*}
   Z_{\text{type II}}(t,\omega) = \prod_{n\geq 1} \frac{1}{1- (\omega t^2)^{n(n+2)}}
   \prod_{n\geq 2} \frac{1}{1- (\omega t^2)^{n(2n+1)}}
   \prod_{n\geq 3} \frac{1}{1- (\omega t^2)^{n(2n+1)}}\\
   \times \prod_{n\geq 4} \frac{1}{1- (\omega t^2)^{n(2n-1)}}
   ~
   \frac{1}{1- (\omega t^2)^{78}}
   ~ 
   \frac{1}{1- (\omega t^2)^{133}}
   ~ 
   \frac{1}{1- (\omega t^2)^{248}}
   ~ 
   \frac{1}{1- (\omega t^2)^{52}}
   ~ 
   \frac{1}{1- (\omega t^2)^{14}}
\end{multline*}
is the corresponding to Type II riemannian symmetric  spaces.
Expanding as a power series in $t$ and $\omega$, we have
\begin{equation*}
  Z_{\text{CRSS}}(t,\omega) = 1 + \sum_{d,s\geq 1} N_{d,s} t^s \omega^d~,
\end{equation*}
where $N_{d,s}$ is now the number of $d$-dimensional compact riemannian
symmetric spaces with an $s$-dimensional group of isometries.
Figure \ref{fig:averagedimension} shows a
plot of the average dimension of a compact riemannian symmetric space
$(\dim\fp)$ as a function of the dimension  of its isometry group
$(\dim\fg$) for the $378,683,913,003,348,073,310,000,493,022$ compact
riemannian symmetric spaces whose isometry group have dimension $\leq
1000$, depicted by the blue line.  The red line is the graph of $\dim\fp
= (\dim\fg)^{2/3}$.

\begin{figure}[h!]
  \centering
  \includegraphics[width=0.75\textwidth]{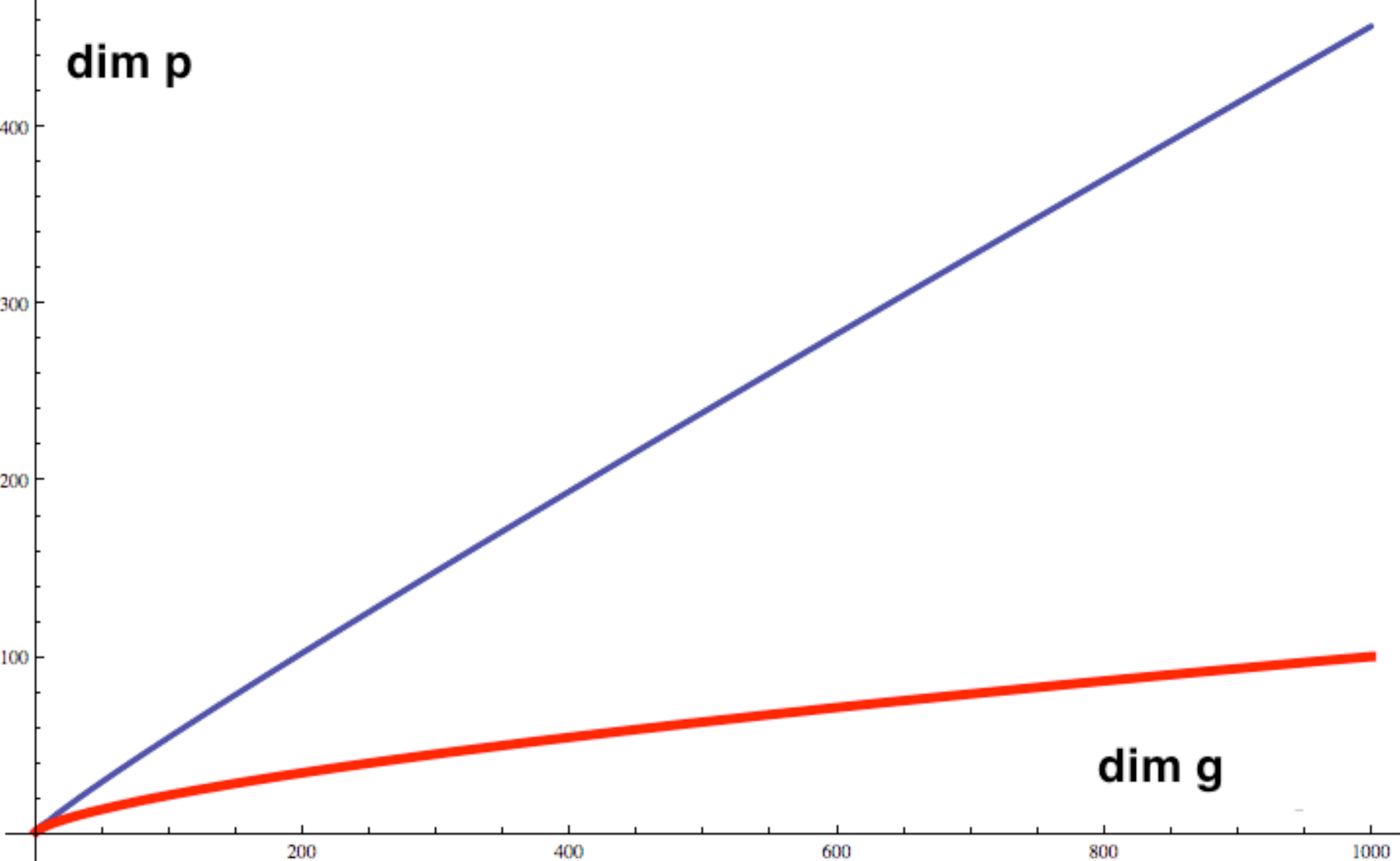}
  \vspace{8pt}
  \caption{Average $\dim\fp$ as a function of $\dim\fg$, for
    $(\fg,\fp^\perp)$ a symmetric split}
  \label{fig:averagedimension}
\end{figure}

Despite the fact that the average behaviour is linear, just as in the
case of the nondegenerate branch of the moduli space, it is easy to cook
up models where the behaviour is of the desired form.  Consider the type
I symmetric spaces \textsf{BD}I, \textsf{A}III and \textsf{C}II of rank
1; that is, those with $q=1$.  The dimension of the isometry group
clearly goes like the square of the dimension of the symmetric space:
$\dim\fg = n(n+1)/2$ and $\dim\fp = n$ for \textsf{BD}I; $\dim\fg =
n(n+1)$ and $\dim\fp = 2n$ for \textsf{A}III; and $\dim\fg =
(n+1)(2n+3)$ and $\dim\fp = 4n$ for \textsf{C}II.  Taking a product of
$n$ such symmetric spaces, yields $\dim\fp \sim n^2$ whereas $\dim\fg
\sim n^3$.  In fact, one is not restricted to rank 1 symmetric spaces.
Taking $p$ to infinity while keeping $q$ fixed in the above cases also
yields the right asymptotic behaviour.

A concrete series of models which realises this behaviour is to take the
($n(n^2-1)+2$)-dimensional lorentzian Lie 3-algebra $V_n$ in Theorem
\ref{th:lorentzian} corresponding to the semisimple Lie algebra
\begin{equation*}
  \fg_n = \underbrace{\fsu(n) \oplus \dots \oplus \fsu(n)}_{\text{$n$
      times}}~,
\end{equation*}
and a choice of scale for the inner product on each simple factor.  The
classical moduli space of the Bagger--Lambert model associated to $V_n$
has (at least) the following branches:
\begin{itemize}
\item a \emph{nondegenerate} branch, where the moduli space becomes
  \begin{equation*}
    \RR^{16} \times \RR^{8 n(n-1)}/(S_n)^n~,
  \end{equation*}
  which has dimension $8(2 + n(n-1))$;
\item a number of \emph{degenerate} branches, associated to the compact
  riemannian symmetric spaces with isometry $\fg_n$; that is, products
  of the following irreducible factors:
  \begin{itemize}
  \item the type II symmetric space $(\fsu(n) \oplus \fsu(n),
    \fsu(n))$, of dimension $n^2-1$;
  \item the type I symmetric space \textsf{A}I, of dimension
    $\half (n-1)(n+2)$;
  \item if $n$ is even, the type I symmetric space
    \textsf{A}II, of dimension $n^2-1$; and
  \item the type I symmetric spaces \textsc{A}III, of dimension
    $2q(n-q)$ for $1\leq q \leq n-1$.
  \end{itemize}
\end{itemize}
In particular, both the nondegenerate branch and the degenerate branch
consisting of $n$ type I symmetric spaces \textsf{A}III with $q=1$ have
the right asymptotic behaviour.

In summary, we believe that the question now is not whether there exist
plausible Bagger--Lambert models with the $N^{3/2}$ asymptotic
behaviour, but how to choose among the plethora of such models.

\bibliographystyle{utphys}
\bibliography{AdS,AdS3,ESYM,Sugra,Geometry,Algebra}

\end{document}